\documentclass[11pt]{article} 
\usepackage[dvips]{graphicx} 
\usepackage[dvips]{color} 
\usepackage{amsmath, amsthm, amssymb} 
\usepackage{epsfig}

\newtheorem{theorem}{Theorem} 
\newtheorem{corollary}{Corollary} 
\newtheorem{lemma}{Lemma}

\newcommand{\comment}[1]{}

\title{Sleeping on the Job: \\ Energy-Efficient Broadcast for Radio Networks}
\author{Valerie King\thanks{Department of Computer Science, University of Victoria, BC, Canada; email: {val@cs.uvic.ca}} \and Cynthia Phillips\thanks{Sandia National Laboratories, NM, USA; email: caphill@sandia.gov } \and Jared Saia\thanks{Department of Computer Science, University of New Mexico, NM, USA; email: {saia}@cs.unm.edu.  This research was partially supported by NSF CAREER Award 0644058 and NSF CCR-0313160.} \and Maxwell Young\thanks{David R. Cheriton School of Computer Science, University of Waterloo, ON, Canada; email: {m22young}@cs.uwaterloo.ca}}

\topmargin=-0.3in \oddsidemargin=0.1in \evensidemargin=0.1in 
\textwidth=6.1in \textheight=9in

\begin{document}
\date{}
\maketitle

\begin{abstract}

Power is one of the most critical resources in battery-operated 
devices. In this paper, we address the problem of minimizing power 
consumption when performing reliable broadcast on a radio network. 
We consider the following popular model of a radio network. Each 
node in the network is located on a point in a two dimensional 
grid, and whenever a node sends a message, all awake nodes within 
$L_{\infty}$ distance $r$, for some fixed $r$, receive the 
message. In the \emph{broadcast problem}, some node wants to send 
a message to all other nodes in the network. We want to do this 
even when up to a $1/2$ fraction\footnote{There is nothing special 
about the fraction $1/2$. In fact, our results will still hold, up 
to constant factors, for any fixed fraction of deleted nodes that 
is strictly greater than $0$.} of the nodes within any $2r+1$ by 
$2r+1$ square in the grid can be deleted by an adversary. The set 
of deleted nodes are carefully chosen by the adversary to foil our 
algorithm and moreover, the set of deleted nodes may change 
periodically. This adversary models worst-case behavior due to 
mobile nodes moving around in the grid; static nodes loosing power 
or ceasing to function; or simply some points in the grid being 
unoccupied by nodes.

A trivial solution to this broadcast problem requires each node in 
the network to be awake roughly $1/2$ the time, and a trivial 
lower bound shows that each node must be awake for at least a 
$1/n$ fraction of the time where $n = (2r+1)^{2}$. Our first 
result is an algorithm that requires each node to be awake for 
only a $1/\sqrt{n}$ fraction of the time in expectation. Our 
algorithm achieves this reduction in power consumption even while 
ensuring correctness with probability $1$, and keeping optimal 
values for other resource costs such as latency and number of 
messages sent. We give a lower-bound that shows that this 
reduction in power consumption is asymptotically optimal when 
latency and number of messages sent must be optimal. However, if 
we can increase the latency and messages sent by only a $\log^{*} 
n$ factor we can further increase the energy savings. In 
particular, we give a Las Vegas algorithm that requires each node 
to be awake for only a $(\log^{*} n)/n$ expected fraction of the 
time in this scenario. We also give a lower-bound showing that 
this second algorithm is near optimal.

In the process of achieving these results, we define and study a 
new and compelling data streaming problem that may have 
applications in other domains. Finally, we show how our results 
can be used to ensure energy-efficient broadcast even in the 
presence of Byzantine faults.

\end{abstract}

\pagebreak

\section{Introduction}

We consider the problem of broadcast of a message over a radio 
network. We use a model for radio networks that has been studied 
extensively in the distributed computing literature~\cite{koo, 
bhandari, bhandari2, koo2}. In particular, we assume each node is 
situated on a point in a (possibly infinite) two dimensional grid 
and whenever a node sends a message, all awake nodes within 
$L_{\infty}$ distance $r$, for some fixed $r$, receive the 
message. If two nodes broadcast simultaneously, the messages 
interfere, so nodes in the neighborhoods of both senders receive 
no message. Thus, as in previous work, we assign a predetermined 
schedule of time slots to the nodes in a given neighborhood to 
avoid such message collision. We also assume that in any $2r + 1$ 
by $2r + 1$ square in the grid, up to $t$ nodes may suffer faults. 
We consider the cases where these faults are either all 
\emph{fail-stop}: the $t$ nodes are all deleted from the network; 
or, much harder, \emph{Byzantine}: the $t$ nodes are taken over by 
an adversary and deviate from our protocol in an arbitrary manner. 
The goal is to design a protocol that allows for a single node to 
broadcast a message to all other nodes in the network, so that 
eventually all non-faulty nodes learn the correct message.

Power consumption is one of the most critical resource costs in 
radio and sensor networks, particularly since nodes are usually 
battery powered. The wireless network cards on board radio network 
devices offer a number of different modes with such typical states 
as {\it off, sleeping, idle, receiving} and {\it 
sending}~\cite{armstrong:wakeup, wang:realistic} and the energy 
costs across these modes can vary significantly.
The device is always listening for messages while in the idle 
state, thus maintaining the idle state is only slightly less 
costly then maintaining the receiving state. Remarkably, the cost 
of the idle, receiving and sending states are all roughly 
equivalent, and these costs are an order of magnitude larger than 
the cost of the sleep state\footnote{The difference in energy 
consumption between the idle/send/receive states and the sleep 
state differs depending on the type of card and the communication 
standard being employed. For example, using the IEEE 802.11 
standard with a $11$ Mbps card, the ratios between power 
consumption of the idle/send/receive states and the sleep state 
are all more than $15$~\cite{feeney:investigating}. 
In~\cite{hill:system}, with a different setup employing TinyOS and 
a TR1000 transceiver, the measured ratios are over $1000$.}. Thus, 
to a first approximation, the amount of time spent in the sleep 
state gives an excellent estimate of the energy efficiency of a 
given algorithm~\cite{wang}. Unfortunately, all past algorithms 
for the reliable broadcast problem essentially ignore energy 
efficiency by never allowing any node in the network to transition 
to the sleep state\footnote{When we speak of ``reliable 
broadcast'' we are referring to the model considered in 
~\cite{koo, bhandari, bhandari2, koo2}; the terminology has been 
used in the context of other sensor network models.}.

In this paper, we directly address the problem of designing 
energy-efficient algorithms for broadcast. Crucial to our approach 
is the analysis of a new data streaming problem that we call the 
Bad Santa problem.

\subsection{Bad Santa}

A child is presented with $n$ boxes, one after another. When given 
each box, the child must immediately decide whether or not to open 
it. If the child decides not to open a box, he is never allowed to 
revisit it. At least half the boxes have presents in them, but the 
decision as to which boxes have presents is made by an omniscient 
and adversarial Santa that wants the child to open as many empty 
boxes as possible. The poor child just wants to find a single 
present, while opening the smallest expected number of boxes. This 
is the \emph{Bad Santa problem}.

More formally, a sequence of $n$ bits is streaming by and at least 
half of the $n$ bits are $1$. We can query any bit as it passes, 
but if we allow a bit to pass without querying it, it is lost. We 
must locate a $1$. That is, we seek a Las Vegas algorithm with a 
hard guarantee of success and an minimum expected number of 
queries. We are interested in two variants of this problem. First 
the single round case described above. Second the multi-round case 
where there are multiple $n$-bit streams that we query 
consecutively. In each stream at least half the values are $1$, 
but these values may be distributed differently in each stream. 
Here again we want to find how many expected queries are required 
until we find a $1$.

The connection of this problem to energy efficiency in sensor 
networks is as follows. The value $n$ is $(2r+1)^{2}$ and the bits 
to be queried represent time steps at which a message may be sent. 
During some of these time steps, due to a fault in the processor 
that is scheduled to send in that time step, no message is sent. 
In this case, the bit queried in that time step is a $0$. However, 
in at least half of the $n$ time steps, the processor is not 
faulty and in this case, the bit queried will be a $1$. The goal 
of the listening processor is to sleep as much as possible (query 
as few bits as possible) but still guarantee it will receive the 
message (query a $1$).

We insist that the data streaming algorithm guarantee that a $1$ 
be found since any probability of error would depend on $n$, which 
may be much smaller than the total number of nodes in the network. 
In particular, even a probability of error that is exponentially 
small in $n$ might not be large enough to use a union bound to 
show that all nodes in the network receive the correct message 
with high probability.

\subsection{Our Results}

We present four major results in this paper that are summarized in 
the theorems below. Theorem~\ref{t:single} is given in 
Section~\ref{single_stream}; Theorem~\ref{t:multiple} in 
Section~\ref{multi_streams}; Theorem~\ref{t:fail} in 
Section~\ref{}; and Theorem~\ref{t:byzantine} in 
Section~\ref{reliable_broadcast}.

\begin{theorem}
\label{t:single} For the single round Bad Santa problem, the 
optimal expected number of queries is $\Theta(\sqrt{n})$
\end{theorem}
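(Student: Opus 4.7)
The proof needs a matching upper bound and lower bound, and I would structure it as two independent arguments.

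\textbf{Upper bound ($O(\sqrt{n})$).} My plan is to exhibit a two-phase Las Vegas algorithm. In Phase~1, spanning positions $1,\ldots,n/2-1$, the algorithm queries each position independently with probability $p=1/\sqrt{n}$, terminating immediately upon seeing a $1$. In Phase~2, spanning positions $n/2,\ldots,n$, if Phase~1 produced no $1$, the algorithm queries every remaining position in order until it finds a $1$. Correctness is a one-line pigeonhole argument: since at most $n/2$ of the $n$ bits are $0$ and Phase~2 contains $n/2+1$ positions, at least one Phase~2 position must be a $1$, so the algorithm always succeeds. For the expected query count, I would bound Phase~1 contribution by $p(n/2-1) \leq \sqrt{n}/2$, and then argue that Phase~2 contributes $O(1)$ in expectation by a case analysis on the number $a$ of zeros the adversary places in Phase~1: if $a$ is small, the algorithm almost surely stops in Phase~1 and Phase~2 is skipped; if $a$ is close to $n/2-1$, then Phase~2 has at most $n/2-a$ zeros, so querying every position until a $1$ terminates in $O(1)$ expected queries. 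Either way the total is $O(\sqrt{n})$.

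\textbf{Lower bound ($\Omega(\sqrt{n})$).} My plan is to apply Yao's minimax principle: I would construct a distribution $\mu$ on valid inputs so that every deterministic algorithm that is Las Vegas on $\mathrm{supp}(\mu)$ makes $\Omega(\sqrt{n})$ expected queries under $\mu$. The design principle is that no single position (or $o(\sqrt{n})$-sized set of positions) may be a guaranteed $1$ across $\mathrm{supp}(\mu)$; otherwise a deterministic algorithm would simply query that position and finish in $O(1)$ queries. I would therefore build $\mu$ by mixing two adversarial families: inputs in which the $1$s are placed ``late'' (punishing algorithms that wait), and inputs in which the $1$s are placed at positions where the algorithm's querying probability is low (punishing algorithms that rely on a small deterministic query set). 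Parameterizing the algorithm's query probabilities $p_i$ on the all-zero response path, I would show that any algorithm with $\sum_i p_i < c\sqrt{n}$ either fails to query any $1$-position on some input in $\mathrm{supp}(\mu)$, or makes $\Omega(\sqrt{n})$ wasted queries on another input, yielding the desired bound.

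\textbf{Main obstacle.} The hard part is the lower bound, specifically the choice and analysis of $\mu$. The challenge is quantitative: simple distributions such as a uniform $0/1$ mixture with constrained weight, or a random rotation of a fixed string, allow each query to return $1$ with probability at least $1/2$, giving only an $O(1)$ bound. The distribution must be rich enough that no clever adaptive deterministic strategy can exploit conditional independence of future bits given past zero-responses to terminate in $o(\sqrt{n})$ queries, and the $\sqrt{n}$ threshold must emerge from the balance between ``penalize early querying'' and ``penalize deterministic coverage.''
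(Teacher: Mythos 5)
Your plan follows the same overall route as the paper (random probes in the first half, then a linear scan of the second half, plus a Yao-style lower bound), but both halves have problems. For the upper bound, the correctness pigeonhole is fine, but your claim that Phase~2 contributes $O(1)$ in expectation is false, and the two-case analysis meant to establish it misses exactly the regime that determines the $\sqrt n$ threshold. Write $z$ for the number of $1$s in the Phase~1 region; Phase~1 fails with probability about $(1-1/\sqrt n)^{z}$, and conditioned on failure the scan can cost up to about $z$ queries, because the adversary may place all of the second half's (up to roughly $z$) zeros at the very front of Phase~2. The product $(1-1/\sqrt n)^{z}\cdot z$ is maximized near $z\approx\sqrt n$, where Phase~1 fails with constant probability $\approx e^{-1}$ and the scan costs $\approx\sqrt n$, so Phase~2 contributes $\Theta(\sqrt n)$, not $O(1)$; this middle case is covered by neither of your branches (``$a$ small'' or ``$a$ close to $n/2-1$''). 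The theorem's $O(\sqrt n)$ bound still holds, but only via the explicit trade-off the paper carries out: it bounds the expected cost by $\sqrt n+\bigl(1-2i/\sqrt n\bigr)^{\sqrt n}\, i\sqrt n$, where $i\sqrt n$ is the number of $1$s in the first half, maximizes over $i$, and obtains at most $\tfrac32\sqrt n$. As written, your justification of the upper bound does not go through.

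For the lower bound, what you give is a design principle, not a proof, and you yourself identify the missing piece (the choice and analysis of $\mu$) as the hard part --- that is precisely the content of the paper's argument. Note also that ``place the $1$s where the algorithm's querying probability is low'' is not available to you under Yao as you set it up: $\mu$ must be fixed before the deterministic algorithm is quantified over; the way to defeat every small deterministic query set is to randomize the placement of the $0$s inside a short designated window. Concretely, the paper uses: with probability $1/2$ (CASE 1), $\sqrt n$ uniformly random positions of $[1,n/2]$ are $1$, the window $[n/2+1,\,n/2+\sqrt n]$ is all $0$, and the remaining bits are $1$; with probability $1/(2\sqrt n)$ each (CASE $2k$, $k=0,\dots,\sqrt n-1$), $[1,n/2]$ has $k$ random $0$s, the window has $\sqrt n-k$ random $0$s, and everything after the window is $0$. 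Then any (essentially) zero-error deterministic algorithm that makes $x<\sqrt n$ queries in $[1,n/2]$ must query more than $\sqrt n-x$ positions of the window, or else some CASE-$2x$ input of positive probability defeats it; and in either case ($x\ge\sqrt n$ or $x<\sqrt n$) the CASE 1 inputs, whose window is all $0$ and whose first half is sparse in $1$s, force expected cost $\Omega(\sqrt n)$. Without this (or an equivalent) explicit distribution and case analysis, the lower bound --- the genuinely hard half of the theorem --- remains unproved in your proposal.
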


\begin{theorem}
\label{t:multiple} For the $k$ round Bad Santa problem, the 
optimal expected number of queries is $O(\log^{(k)} (n/2) +k$ and 
$\Omega(\log^{(2k)} n )$. In particular, for $k = log ^{*} n$, we 
can ensure the expected number of queries is $O(log^{*} n)$
\end{theorem}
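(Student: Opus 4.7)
The plan is to prove both bounds by induction on $k$, with each step reducing a $k$-round instance to a $(k{-}1)$-round one.

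For the upper bound I would construct a recursive Las Vegas algorithm. At the start of round $1$, pick a uniformly random set $Q$ of $t_1$ positions (without replacement) and query exactly the bits of $Q$ as the stream passes; if some queried bit is $1$, halt, otherwise treat rounds $2,\ldots,k$ as a fresh $(k{-}1)$-round instance and recurse. Since at least $n/2$ of the $n$ bits in the round are $1$, the probability that all $t_1$ queries return $0$ is at most $\binom{n/2}{t_1}/\binom{n}{t_1} \leq 2^{-t_1}$. Letting $C(k)$ denote the expected query cost, I get the recurrence
$$C(k) \leq t_1 + 2^{-t_1}\, C(k-1),$$
with base case $C(1) = O(\sqrt{n})$ from Theorem~\ref{t:single}. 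Choosing $t_1 = \lceil \log C(k-1) \rceil$ balances the two terms, both of which become $\log C(k-1) + O(1)$, so $C(k) \leq \log C(k-1) + O(1)$. Unrolling this iterated logarithm from the base case and absorbing the per-step slack into the additive $+k$ yields $C(k) = O(\log^{(k)}(n/2) + k)$. For $k = \log^* n$ the iterated logarithm is $O(1)$, giving the $O(\log^* n)$ bound.

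For the lower bound I would apply Yao's minimax principle against the distribution that, independently in each round, places the $n/2$ ones on a uniformly random subset of positions. Inductively, any deterministic adaptive scheme that uses at most $t$ expected queries in a single round must fail to find a $1$ with probability at least roughly $2^{-O(t)}$ under the uniform distribution (by a counting argument over the possible query patterns the algorithm can realize with $O(t)$ queries), so the total expected cost obeys $g(k) \geq \min_t\bigl(t + 2^{-O(t)}\, g(k-1)\bigr)$. This matches the upper bound recurrence up to the constant inside the exponent, and that constant is exactly what costs a factor of two in the iteration count when the recurrence is unrolled, yielding $g(k) = \Omega(\log^{(2k)} n)$.

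The main obstacle will be the lower bound: I need to verify that adaptivity helps only by a constant factor, so that the $2^{-O(t)}$ failure probability really does hold against adaptive and not merely oblivious algorithms. Once that is in place, both directions reduce to unrolling essentially the same iterated-logarithm recurrence, and plugging $k = \log^* n$ into the upper bound immediately yields the $O(\log^* n)$ consequence stated in the theorem.
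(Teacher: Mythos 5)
Your upper bound is essentially the paper's own construction phrased recursively: uniform random probes in each round whose failure probability is at most $2^{-t}$, a final pass handled by the single-round $O(\sqrt{n})$ strategy, and per-round budgets chosen so that each later round contributes only $O(1)$ to the expectation. This is sound, with one bookkeeping caveat: unrolling $C(k)\le \log C(k-1)+O(1)$ from $C(1)=O(\sqrt{n})$ gives $C(k)=O(\log^{(k-1)}n+k)$, so you need $k+1$ streams to reach $\log^{(k)}(n/2)$ --- exactly the indexing of the paper's $(k+1,O(\lg^{(k)}(n/2)+k))$-strategy in Theorem~\ref{multistream}; your final sentence claims one more log level than the recursion delivers for the same number of streams, so state the stream count explicitly.

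The lower bound has a genuine gap. Against your hard distribution (each stream independently receives exactly $n/2$ ones placed uniformly at random), the zero-error distributional complexity is $O(1)$, not $\Omega(\log^{(2k)}n)$: the deterministic algorithm that queries every position of the first stream until it sees a $1$ is correct with probability $1$ on the support, and its expected cost is $\sum_{j\ge 1}\Pr[\text{first } j-1 \text{ positions are all } 0]\le \sum_{j\ge 1}2^{-(j-1)}=O(1)$. Hence Yao's principle with this distribution can only give $\Omega(1)$. This also pinpoints why your recurrence $g(k)\ge\min_t\bigl(t+2^{-O(t)}g(k-1)\bigr)$ cannot be established: an algorithm does not pay its round budget $t$ unconditionally, it pays it only on the failure event (so its expected in-round cost is $O(1)$ under this distribution), and it may take $t>n/2$, which makes failure impossible while costing almost nothing in expectation. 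Your claim that ``uses at most $t$ expected queries in a round implies failure probability at least $2^{-O(t)}$'' is therefore false. A further sanity check: a constant inside the exponent never costs an iterated-log level (it only rescales each level by a constant, which is absorbed after one more logarithm), so if your recurrence did hold it would yield $\Omega(\log^{(k)}n)$, closing the gap that the paper explicitly lists as an open problem.

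The paper's actual route is structurally different and is what you would need. It first proves a single-stream error/cost tradeoff (Lemma~\ref{l:firstpass}) using a distribution in which, with probability $1/3$, one of three intervals is entirely $0$ while the other two each contain $n/4$ ones placed at random; because a whole region can be empty, probes can be wasted there, and any algorithm with error below $(\log^{(i)}n)^{-\epsilon}$ is forced to make $\Omega(\log^{(i+2)}n)$ queries. The multi-stream bound is then an induction in which the first stream's failure probability is traded against the inductive lower bound for the remaining streams (with the guaranteed-success $\Omega(\sqrt{n})$ bound of Theorem~\ref{thm:single-stream-lower} as the base case), and the two-log-level slack per stream in the lemma is precisely where the exponent $2k$ comes from. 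To repair your argument you must replace i.i.d.\ uniform placement by a distribution of this kind --- one in which large portions of a stream can be empty so that small error genuinely forces many queries --- and carry the error parameter, not just the expected cost, through the induction.
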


\medskip

The following two theorems about energy-efficient broadcast are 
established by algorithms based on solutions to the Bad Santa 
problem. Theorem~\ref{t:fail} essentially follows directly from 
Theorems~\ref{t:single} and~\ref{t:multiple}. 
Theorem~\ref{t:byzantine} requires a fingerprint of the message to 
first be broadcast through the network. This fingerprint must be 
sent without any nodes sleeping in order to ensure that all nodes 
receive the correct fingerprint despite the Byzantine faults. 
However, once the fingerprint is known by all nodes, the entire 
message can be sent through the network, in as energy-efficient a 
manner as in Theorem~\ref{t:fail}. For both theorems, and 
throughout the rest of this paper, we let $n = (2r+1)^{2}$.

\begin{theorem}
\label{t:fail} Assume we have a network where at most a $1/2$ 
fraction of the nodes suffer fail-stop faults in any square of 
size $2r+1$ by $2r+1$. Then there exist two algorithms, both of 
which guarantee that all non-faulty nodes in the network receive 
the correct message and which have the following properties.

\begin{itemize}
\item The first algorithm requires all nodes to be awake only a 
$1/\sqrt{n}$ fraction of the time and has optimal latency and 
bandwidth. In particular, each node sends out the message only a 
single time and each node learns the correct message within 
$O(nd)$ time steps where $d$ is the $L_{\infty}$ between the node 
and the originator of the message.

\item For any $k$ between $1$ and $\log^{*} n$, the second 
algorithm requires all nodes to be awake only a $O((\log^{(k)} n) 
/n)$ fraction of the time and has latency and bandwidth within a 
factor of $k$ of optimal. In particular, each node sends out the 
message $k$ times, and each node learns the correct message within 
$O(knd)$ time steps where $d$ is the $L_{\infty}$ distance between 
the node and the originator of the message.
\end{itemize}
\end{theorem}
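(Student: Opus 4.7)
The plan is to reduce the per-node reception problem to the Bad Santa problem and invoke Theorems~\ref{t:single} and~\ref{t:multiple}. The global scaffold is a fixed TDMA schedule: color grid point $(i,j)$ with $(i \bmod (2r{+}1),\; j \bmod (2r{+}1))$, giving $n$ colors, and let each node's color be its designated send slot inside a repeating period of $n$ consecutive time steps. Two same-color points are at $L_\infty$ distance $\ge 2r{+}1$, so their broadcast ranges are disjoint and no listener ever sees a collision, while every $(2r{+}1)\times(2r{+}1)$ neighborhood contains exactly one candidate sender per slot.

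Next I would argue the reduction. Fix an as-yet-uninformed node $v$, and consider one period in which every non-faulty node in $v$'s neighborhood has already received the message. The adversary may delete at most $n/2$ of the $n$ scheduled senders, so the remaining $\ge n/2$ non-faulty, informed senders each broadcast the message in their own slot. Thus $v$ sees a stream of $n$ bits with at least $n/2$ ones, and its slot-by-slot decision of whether to wake and listen is exactly an instance of Bad Santa; the Las Vegas guarantee of Theorems~\ref{t:single} and~\ref{t:multiple} certifies that $v$ catches a $1$ (i.e., receives the message) inside that period in the single-round case, or inside $k$ consecutive periods in the $k$-round case.

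For the first algorithm, every node runs the single-round strategy of Theorem~\ref{t:single}, using $O(\sqrt n)$ expected wakeups per listening period, and transmits exactly once (in its color slot) upon becoming informed; this yields the claimed $O(1/\sqrt n)$ awake fraction and one-send-per-node bandwidth. A shell-by-shell induction on $L_\infty$ distance $d$ from the source shows that the set of informed nodes fills out a concentric square whose side grows by $\Omega(1)$ grid units per period, so every node at distance $d$ is reached within $O(nd)$ time. For the second algorithm, each informed node broadcasts in $k$ successive periods and each listener uses the $k$-round strategy of Theorem~\ref{t:multiple}, achieving $O(\log^{(k)} n + k) = O(\log^{(k)} n)$ expected wakeups over those $kn$ slots plus $k$ scheduled sends, for an overall awake fraction of $O((\log^{(k)} n)/n)$ whenever $k \le \log^{*} n$, at the cost of a factor-$k$ blowup in both sends per node and latency, giving $O(knd)$.

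The main obstacle will be making the propagation-rate claim airtight against the time-varying fault set. The adversary may shift which $\le n/2$ nodes per tile are deleted from slot to slot, so I would need to verify that in any single period the informed frontier still advances: for every boundary-adjacent neighborhood, at least $n/2$ of the $n$ scheduled slots must be occupied by senders that are simultaneously \emph{(i)} non-faulty at that slot, by the per-slot adversary constraint, and \emph{(ii)} already informed, by the inductive hypothesis on the interior shell. Once this is pinned down, the Bad Santa guarantee converts ``frontier-adjacent'' into ``informed in the next period,'' and the $O(nd)$ and $O(knd)$ latency bounds follow. Because both Bad Santa algorithms succeed with probability $1$, so does the resulting broadcast protocol.
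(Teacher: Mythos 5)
Your high-level reduction (TDMA colour schedule, one slot per neighbourhood position, slots-as-bits, non-faulty informed senders as $1$'s, Las Vegas Bad Santa guarantee giving certain receipt) is exactly the connection the paper intends -- the paper gives no standalone proof of Theorem~\ref{t:fail}, saying it ``essentially follows'' from Theorems~\ref{t:single} and~\ref{t:multiple} together with the scheduling and timing machinery of Section~\ref{reliable_broadcast}. But the step you yourself flag as the main obstacle is a genuine gap, and the patch you sketch is the part that is false. Your shell induction needs every frontier node $v$ to have at least $n/2$ neighbours that are both non-faulty and already informed \emph{by the interior shell}. For a node in a diagonal direction from the source, the portion of $N(v)$ that is strictly closer to the source is only about $r^2\approx n/4$ nodes, and even on an axis it is $r(2r+1)<n/2$; worse, the adversary may delete essentially all of these (an $r\times(2r+1)$ block meets every $(2r+1)\times(2r+1)$ square in at most $r(2r+1)<n/2$ nodes, so it is a legal fault pattern). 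Hence the ``at least half the boxes contain presents'' premise of Bad Santa cannot be derived from the inductive hypothesis on the interior shell, and the concentric-square growth argument does not go through as stated.

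The paper's route around this is different: propagation and latency are handled by the corridor argument of Lemma~\ref{delay} (an axis-aligned corridor of width $2r+1$, along which \emph{all} non-faulty nodes commit within $2(|x|+|y|-r)$ rounds), combined with an explicit waiting rule in the protocol -- a node at $(x',y')$ does not transmit the payload until a coordinate-determined time (roughly $t_{init}+2(2r+1)^2(|x'|+|y'|+r)$) by which every non-faulty node in its entire neighbourhood, including nodes \emph{farther} from the source, is guaranteed to be informed. Only then do listeners run the single- or multi-round Bad Santa strategy over the (computable) designated slots, so the $\ge n/2$ guarantee holds for the full neighbourhood rather than for the interior half, and the latency telescopes to $O(nd)$ (resp.\ $O(knd)$). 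Your proposal also leaves the timing alignment implicit: a listener that starts sampling in a period when fewer than half of its neighbours actually transmit forfeits the probability-$1$ guarantee, since a Las Vegas algorithm run on a stream violating its promise proves nothing; the coordinate-based send/listen windows (and dissemination of $t_{init}$) are what make the premise hold at the moment the listener queries. To repair your write-up you would need to replace the shell induction by such a delay-until-the-whole-neighbourhood-is-informed discipline plus a corridor-style propagation lemma, which is precisely the content of Lemma~\ref{delay} and Steps 6--7 of the paper's protocol.
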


\begin{theorem}
Assume we have a network where strictly less than a $1/4$ fraction 
of the nodes suffer Byzantine faults in any square of size $2r+1$ 
by $2r+1$. Further, assume that the message, $m$ to be broadcast 
consists of $|m|$ bits, and that the adversary controlling the 
Byzantine nodes would like to replace the message $m$ with some 
other message $m'$. Then there exist two algorithms, both of which 
ensure that with probability at least $1 - \frac{1}{|m|^{\log 
|m|}}$, all non-faulty nodes receive the correct message $m$. Both 
algorithms require all nodes to be awake for every step during 
which a fingerprint of size $log^{2} |m|$ is initially broadcast 
to the network. However, after this, when the message $m$ itself 
is broadcast, the algorithms have the following properties.

\begin{itemize}
\item The first algorithm requires all nodes to be awake only a 
$1/\sqrt{n}$ fraction of the time and has optimal latency and 
bandwidth.

\item For any $k$ between $1$ and $\log^{*} n$, the second 
algorithm requires all nodes to be awake only a $O((\log^{(k)} n) 
/n)$ fraction of the time and has latency and bandwidth within a 
factor of $k$ of optimal.
\end{itemize}

\label{t:byzantine}
\end{theorem}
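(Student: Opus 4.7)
\medskip
\noindent\textbf{Proof plan for Theorem~\ref{t:byzantine}.} The plan is to split the protocol into two phases and reduce the Byzantine case to the fail-stop case of Theorem~\ref{t:fail} via a short cryptographic fingerprint.

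In Phase~1, the source picks a hash $h$ from a suitable (e.g.\ pairwise- or $\log|m|$-wise independent) family whose outputs have length $\ell = \log^2 |m|$, and broadcasts the pair $(h, f)$ with $f = h(m)$ through the network using the classical Byzantine-tolerant reliable broadcast protocol of Koo/Bhandari for this grid model, which is known to succeed whenever strictly fewer than a $1/4$ fraction of the nodes in any $(2r+1)\times(2r+1)$ square are faulty. All nodes stay awake during this phase; because the fingerprint has only $\log^2 |m|$ bits, the extra ``always-awake'' cost is negligible compared to the transmission of $m$ itself.

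In Phase~2, the source broadcasts $m$ using one of the two algorithms of Theorem~\ref{t:fail}. The key observation is that once every non-faulty node knows $(h,f)$, it will accept a received message $m'$ only if $h(m') = f$. By the collision properties of $h$, for any $m'\neq m$ chosen by the adversary we have $\Pr[h(m')=f] \le 2^{-\ell} = |m|^{-\log|m|}$; a union bound over the (polynomially many) distinct fake messages the adversary could inject during the broadcast — together with the fact that $h$ is only revealed after the adversary commits to its strategy in Phase~1 — keeps the total failure probability below $|m|^{-\log|m|}$. Consequently, from each listening node's point of view, every Byzantine transmission is either silently dropped (wrong fingerprint) or happens to be the correct $m$; in either case the Byzantine node behaves no worse than a fail-stop node. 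Since the faulty fraction is strictly below $1/4 < 1/2$, Theorem~\ref{t:fail} applies verbatim, and we inherit its two sleep-fraction guarantees ($1/\sqrt{n}$ and $O((\log^{(k)} n)/n)$) together with the corresponding latency/bandwidth bounds.

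The main obstacle I expect is the hashing step: one has to pick a hash family and a scheduling of when $h$ is revealed so that, even against an adaptive adversary who sees $h$ before Phase~2 ends, the union bound over all candidate fake messages (of which there can be many, since a Byzantine node may try a different $m'$ in each of its transmission slots) still yields the stated $1 - |m|^{-\log|m|}$ bound. Standard $k$-wise independent families with $k = \Theta(\log |m|)$ suffice here, but the argument must be laid out carefully. A secondary point to verify is that the Koo/Bhandari reliable broadcast of the $O(\log^2|m|)$-bit fingerprint can be invoked as a black box in exactly the same grid/neighborhood model we use for Phase~2, which follows from the cited results because the threshold assumption ``fewer than a $1/4$ fraction faulty in every $(2r+1)\times(2r+1)$ square'' matches theirs.
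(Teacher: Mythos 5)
Your two-phase structure is exactly the paper's: broadcast a short fingerprint with everyone awake using the known Byzantine-tolerant protocol for this grid model, then broadcast $m$ itself while nodes sample listening slots and accept only messages whose hash matches the committed fingerprint, so that Byzantine senders of wrong messages look like fail-stop nodes and the Bad Santa machinery applies. The gap is in your hashing step. You propose a pairwise/$O(\log|m|)$-wise independent family and a union bound of the form $\Pr[h(m')=f]\le 2^{-\ell}$, arguing that $h$ is ``only revealed after the adversary commits to its strategy in Phase~1.'' But the Byzantine nodes choose their fake messages in Phase~2, \emph{after} $(h,f)$ has been broadcast to the whole network in Phase~1 (it must be public, since every node needs it to filter). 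Once $h$ is known, the independence bound no longer applies to adaptively chosen $m'$: with output length only $\ell=\log^2|m|$ and message space of size $2^{|m|}$, collisions with $f$ are plentiful, and for the standard $k$-wise independent constructions (e.g.\ low-degree polynomials over a finite field) they are easy to compute, so an adversary can simply exhibit some $m'\neq m$ with $h(m')=f$ and defeat the filter with probability~$1$. This is precisely why the paper does \emph{not} attempt an information-theoretic argument: it explicitly assumes a computationally bounded adversary and a cryptographically secure (collision-resistant) hash with $\Theta(\log^2|m|)$-bit output, so that forging $m'$ with $h(m')=h(m)$ requires time superpolynomial in $|m|$ and the $1-|m|^{-\Theta(\log|m|)}$ bound is computational. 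Your proof as written cannot reach the stated conclusion without importing that assumption.

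A secondary omission: you treat Phase~2 as an unmodified invocation of Theorem~\ref{t:fail}, but the paper has to coordinate the two phases per node. Fingerprint commitment happens at different times at different grid locations, so each node $q$ must delay its broadcasts of $m$ until all of $N(q)$ has had time to commit to $f_{maj}$, and each node may start its Bad-Santa sampling only after it holds the fingerprint; otherwise a sampling (mostly sleeping) node can miss every useful transmission of $m$ in its neighborhood. The paper handles this with the corridor/delay bound (Lemma~\ref{delay}) and the explicit wait conditions in steps 6--7 of the bit-reduction protocol, and this pipelining is also what preserves the claimed latency. Your plan would need an analogous scheduling argument rather than a black-box appeal to Theorem~\ref{t:fail}.
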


\subsection{Related Work}

The reliable broadcast problem over a radio network arranged on a 
two-dimensional grid has been extensively studied~\cite{koo, 
bhandari, bhandari2, koo2}. The current state of the art on this 
problem is a clever algorithm that can ensure that a message is 
sent reliably to all non-faulty nodes provided that strictly less 
than a $1/4$ fraction of the nodes in any $(2r+1)$ by $(2r+1)$ 
square of the grid suffer Byzantine faults~\cite{koo2}. 
Unfortunately, as mentioned previously, all previous algorithms 
proposed for this problem require each node in the network to be 
awake for every time step, and thus are not energy-efficient. Our 
algorithm from Theorem~\ref{t:byzantine} makes use of the 
algorithm from~\cite{koo2} to broadcast the fingerprint of the 
message.

Data streaming problems have been very popular in the last several 
years~\cite{henzinger, muthu}. Generally, past work in this area 
focuses on computing statistics on the data using a small number 
of passes over the data stream. In~\cite{henzinger}, the authors 
treat their data stream as a directed multi-graph and investigate 
the space requirements of computing certain graph properties 
regarding node degree and connectedness. Munro and 
Paterson~\cite{munro} consider the problem of selection and 
sorting with a limited number of passes over one-way read-only 
memory. Along similar lines, Guha and 
McGregor~\cite{suha:approximate, suha:lower} examine the problem 
of computing statistics over data streams where the data objects 
are ordered either randomly or arbitrarily. Alon, Matias and 
Szegedy~\cite{along:frequency} examine the space complexity of 
approximating the frequency of moments with a single pass over a 
data stream. In all of these cases, and 
others~\cite{demaine:frequency, cormode:space}, the models differ 
substantially from our proposed data streaming problem. Rather 
than computing statistics or selection problems, we are concerned 
with the guaranteed discovery of a particular value, and under our 
model expected query complexity takes
priority over space complexity.\\

\section{The Single Stream Problem}\label{single_stream}

We consider the single stream problem first. A naive algorithm is 
to query $n/2 + 1$ bits uniformly at random. The expected cost for 
this algorithm is $\Theta(n)$ since the adversary will place the 
$1$'s at the end of the stream. The following is an improved 
algorithm. 

\hspace{-15pt}\begin{tabular}[t]{l p{2.8in} p{2.8in}} \hline{}&&
\end{tabular}\\
\vspace{5pt} \noindent{{\bf Single Round Strategy}}
\begin{enumerate}
\item{}Perform $\sqrt{n}$ queries uniformly at random from the first half
of the queue. Stop immediately upon finding a  $1$.
\item{}Else, starting with the first bit in the second half of the stream, query each consecutive 
bit until a $1$ is obtained.
\end{enumerate}
\begin{tabular}[t]{l p{2.8in} p{2.8in}}
\hline{}&&
\end{tabular}

\begin{theorem}
The expected cost of the above strategy is $O(\sqrt{n})$.
\end{theorem}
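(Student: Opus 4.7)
My plan is to split the cost of the strategy into the cost of Step~1 and the expected additional cost contributed by Step~2, and then optimize over the adversary's choice of how to distribute the $1$'s. Let $f$ denote the number of $1$'s placed in the first half of the stream. Since at least $n/2$ of the bits are $1$, the second half contains at least $n/2 - f$ ones, and therefore at most $f$ zeros. Two elementary bounds will drive the analysis: (i) Step~1 performs at most $\sqrt{n}$ queries deterministically, and (ii) conditional on reaching Step~2, the consecutive scan terminates within at most $f+1$ queries, since the second half has at most $f$ zeros.

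Next I would bound the probability that Step~1 fails. Since Step~1 draws $\sqrt{n}$ positions uniformly from the $n/2$ positions of the first half and these positions contain $f$ ones, sampling (with or without replacement, up to lower-order terms) gives
\[
\Pr[\text{Step 1 fails}] \;\le\; \left(1 - \frac{2f}{n}\right)^{\sqrt{n}} \;\le\; e^{-2f/\sqrt{n}}.
\]
Combining (i), (ii), and this bound, the total expected number of queries is at most
\[
\sqrt{n} \;+\; (f+1)\, e^{-2f/\sqrt{n}}.
\]

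The main (and really only) technical step is to show that the second term is $O(\sqrt{n})$ uniformly in $f \in \{0, 1, \dots, n/2\}$. I would treat this as a one-variable calculus exercise: setting $g(f) = (f+1)e^{-2f/\sqrt{n}}$ and differentiating, $g$ is maximized at $f = \Theta(\sqrt{n})$, where $g(f) = O(\sqrt{n}) \cdot e^{-O(1)} = O(\sqrt{n})$; for smaller $f$ the factor $f+1$ is already $O(\sqrt{n})$, and for larger $f$ the exponential decay dominates. Hence the expected cost is $O(\sqrt{n})$ regardless of the adversary's placement of the $1$'s, completing the proof.

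The hard part is really just convincing oneself that the adversary has no better strategy than balancing $f$ near $\sqrt{n}$; everything else is routine. Note that this matches, up to constants, the $\Omega(\sqrt{n})$ lower bound implicit in Theorem~\ref{t:single}, confirming that the strategy is asymptotically optimal.
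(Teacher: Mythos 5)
Your proposal is correct and follows essentially the same route as the paper's own proof: bound the cost by $\sqrt{n}$ plus (probability Step~1 misses) times (number of zeros in the second half), then maximize over the adversary's split of the $1$'s by elementary calculus, the only cosmetic difference being that you pass to the $e^{-2f/\sqrt{n}}$ bound while the paper differentiates $\bigl(1-2i/\sqrt{n}\bigr)^{\sqrt{n}}\cdot i\sqrt{n}$ directly. No gaps; the argument stands as written.
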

\begin{proof}
Assume that there are $i\sqrt{n}$ $1$s in the first half of the stream
where $i\in{}[0,\frac{\sqrt{n}}{2}]$. This implies that there are then
$(n/2) - i\sqrt{n}$ $1$s in the second half of the stream. By querying
$\sqrt{n}$ slots uniformly at random in the first half of the stream,
the probability that the algorithm fails to obtain a $1$ in the first half is no more than:
\begin{eqnarray*}
\left(1-\frac{i\sqrt{n}}{(n/2)}\right)^{\sqrt{n}} = 
\left(1-\frac{2i}{\sqrt{n}}\right)^{\sqrt{n}}
\end{eqnarray*}
\noindent{}for an expected overall cost not exceeding:
\begin{eqnarray*}
\sqrt{n} + 
\left(1-\frac{2i}{\sqrt{n}}\right)^{\sqrt{n}}\cdot{}i\sqrt{n}.
\end{eqnarray*}
\noindent{}We find the maximum by taking the derivative:
\begin{eqnarray*}
&&\frac{d}{di}\left(1-\frac{2i}{\sqrt{n}}\right)^{\sqrt{n}}\cdot{}i\sqrt{n}=\sqrt{n}\left(1-\frac{2i}{\sqrt{n}}\right)^{\sqrt{n}} 
- 2i\sqrt{n}\left(1-\frac{2i}{\sqrt{n}}\right)^{\sqrt{n}-1}
\end{eqnarray*}
\noindent{}and setting it to zero while solving for $i$ gives:
\begin{eqnarray*}
i&=&\frac{\sqrt{n}}{2(\sqrt{n}+1)}.
\end{eqnarray*}
\noindent{}Finally, plugging this into the expected cost function 
gives an expected cost of at most $\frac{3}{2}\sqrt{n}$.
\end{proof}

\noindent{}We now show that this bound is optimal within a 
constant factor.
\begin{theorem}
 $\Omega(\sqrt{n}) $ expected queries are necessary in the single round case.  
\label{thm:single-stream-lower}
\end{theorem}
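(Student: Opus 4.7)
The plan is to apply Yao's minimax principle: I will exhibit an input distribution $\mathcal{D}$ (supported on streams with at least $n/2$ ones) such that every deterministic algorithm which succeeds on all valid inputs has expected cost $\Omega(\sqrt{n})$ on $\mathcal{D}$. This lower bound then transfers to any randomized Las~Vegas algorithm with a hard guarantee of success.

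The distribution I propose is parametrized by an integer $k$ drawn uniformly from $\{0, 1, \ldots, \lfloor\sqrt{n}/2\rfloor\}$. Conditional on $k$, the stream has exactly $n/2$ ones: $k\sqrt{n}$ placed at positions chosen uniformly at random within the first half $[1, n/2]$, and the remaining $n/2 - k\sqrt{n}$ placed in the second half in the worst-case manner used in the upper-bound analysis above (so that any algorithm reaching the second half without having found a $1$ faces an $\Omega(k\sqrt{n})$-length scan). The crucial feature of $\mathcal{D}$ is that the expected first-half density of ones is $\Theta(1/\sqrt{n})$, placing the problem in exactly the regime where the upper bound analysis is tight.

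For any deterministic algorithm $A^*$, I would let $q_1$ be the expected number of queries $A^*$ makes in the first half before either finding a $1$ or moving on, and split into two cases. If $E[q_1] \geq \sqrt{n}/8$, the bound is immediate. Otherwise, averaging the first-half miss probability $(1 - 2k/\sqrt{n})^{q_1}$ over the uniform $k$ shows that with constant probability the algorithm exits the first half having seen only zeros, and must then complete its search in the adversarially structured second half, incurring $\Omega(\sqrt{n})$ additional expected queries there. The two cases combine to give $E_\mathcal{D}[T(A^*)] = \Omega(\sqrt{n})$.

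The main obstacle I anticipate is making the second-half argument robust against adaptive querying: after observing a prefix of zeros, the algorithm has a Bayesian posterior on $k$ and might try to exploit it to shortcut the second-half scan (for example, by binary-searching a contiguous boundary). I would block this by randomising the positions of the second-half zeros within a suffix of appropriate length, so that no single query reveals useful positional information and each query succeeds with probability at most $O(1/\sqrt{n})$. Working out the exact form of this second-half randomisation, and verifying that the expected second-half cost is indeed $\Omega(\sqrt{n})$ against arbitrary adaptive strategies, is where the bulk of the technical effort would go.
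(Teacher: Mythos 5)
There is a genuine gap, and it sits exactly where you deferred the technical work: the second-half cost. Your distribution has exactly $n/2$ ones with only $k\sqrt{n}$ of them in the first half, so the second half necessarily contains $n/2-k\sqrt{n}$ ones; for the values of $k$ that make the first half hard (small $k$) the second half is almost entirely ones. Hence your goal of making every second-half query succeed with probability $O(1/\sqrt{n})$ is unachievable: if the $k\sqrt{n}$ second-half zeros are spread uniformly, any fixed second-half position is a one with probability $1-2k/\sqrt{n}\ge 1/2$ whenever $k\le \sqrt{n}/4$, and if instead the zeros are confined to a contiguous block or a suffix region, a single probe just outside that region or a doubling search finds a one in $O(\log n)$ queries. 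So a deterministic algorithm that interleaves a few first-half probes with a few second-half probes (or a doubling sequence) has $O(1)$, at worst $O(\log n)$, expected cost against your distribution for every $k$: when $k$ is large the first half is dense, when $k$ is small the second half is. No placement of second-half zeros consistent with the promise of $n/2$ ones can force the $\Omega(\sqrt{n})$ second-half cost your case analysis needs. (A smaller issue: with $k$ uniform on $\{0,\dots,\lfloor\sqrt{n}/2\rfloor\}$, the averaged first-half miss probability after $q_1$ queries is $\Theta(1/q_1)$, not a constant, so even the first-half bookkeeping does not balance as claimed once $q_1$ grows.)

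The missing idea is that the hardness cannot come from per-query failure probability at all --- since at least half the stream is ones, some half is always dense --- but from combining one-pass irrevocability with the hard success requirement. The paper's proof designates a short window $[n/2+1,\,n/2+\sqrt{n}]$ and mixes two families: CASE 1, where the first half has only $\sqrt{n}$ random ones, the window is all zeros, and the ones sit after the window; and CASE $2k$, where the first half is all ones except $k$ random zeros, the window contains $\sqrt{n}-k$ random zeros, and everything after the window is zero. A deterministic algorithm that has not yet seen a one is just a fixed list of positions (so adaptivity, your other worry, is a non-issue), and any such list with $x<\sqrt{n}$ first-half positions must include more than $\sqrt{n}-x$ positions inside the window: otherwise there is a positive-probability CASE $2x$ input whose zeros sit exactly on the queried positions, on which the algorithm never finds a one, violating the guarantee. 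CASE 1 then charges this forced commitment: with constant probability the $x$ first-half queries all miss and the $\ge \sqrt{n}-x$ window queries are wasted, giving expected cost $\Omega(\sqrt{n})$. Your proposal has no analogue of this forced commitment, which is why its two cases cannot be made to combine.
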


\begin{proof} We follow Yao's min-max method \cite{yao:probabilistic} to prove lower bounds on any randomized algorithm which errs with probability no greater than $ \lambda=1/2^{\tilde{O}(\sqrt{n})}$:  We describe an input distribution
and show that any deterministic algorithm which errs with 
tolerance (average error) less than 
$2\lambda=1/2^{\tilde{O}(\sqrt{n})}$ on this input distribution 
requires $\Omega(\sqrt{n})$
queries on average for this distribution.  By \cite{yao:probabilistic}, this implies that the complexity of any randomized algorithm with error $\lambda$ has cost $1/2 \Omega (\sqrt{n})= \Omega (\sqrt{n})$. Let $[a,b]$ denotes the bits in position $a, a+1,...,b-1,b$ of the stream. The distribution is as follows:\\

\noindent CASE 1. With probability $1/2$, 
$\sqrt{n}$ uniformly distributed random bits  in $[1,n/2]$ are set to $1$ and the remaining bits in that interval are $0$,
$[n/2+1,n/2 + \sqrt{n}]$ are all set to
0, and the remaining bits are 1.\\

\noindent CASE 2k:  For $k=0,...,\sqrt{n}-1$, with probability $ 1/(2\sqrt{n})$,  $[1,...,n/2]$ contains 
a uniformly distributed random set of $k$ 0's and the rest are 1's. Then $\sqrt{n}-k$ 0's are contained in  uniformly distributed random bit positions in
$[n/2+1, n/2+\sqrt{n}]$, and the remaining $k$ bits in positions $[n/2+1, n/2+\sqrt{n}]$ are 1's. The remaining bits in the stream are 0.\\

\smallskip

\noindent {\it Analysis:} Let $A$ be a deterministic algorithm which errs with average probability less than $2\lambda$. Note that $A$ is completely specified by the list of indices of bits to query while it has not yet discovered a $1$, since it stops as soon as it sees a $1$.  Let $x$ be the number of queries in the list  which lie in  $[1,n/2]$.
For a constant fraction of inputs  in CASE 1,  $A$ will not find a 1 in $[1,n/2]$ 
within $\sqrt{n}$ queries. Hence either $x \geq \sqrt{n}$ or  $A$ must find a 1 with high probability 
in $[n/2,n]$.
 Now suppose $x <  \sqrt{n}$. We show that $A$'s list $L$ must contain greater than  $\sqrt{n} -x $ bit positions in $[n/2+1, n/2+ \sqrt{n}]$. If not, it will  err on the input in CASE 2x in which all the
 $x$  positions queried in $[1,n/2]$ and the $\sqrt{n} -x$ positions queried in  $[n/2, n/2 + \sqrt{n}]$ are $0$.
 Since this input occurs with probability  $(2\sqrt{n})^{-1} {n/2 \choose x}^{-1} {\sqrt{n} \choose  x}^{-1} = 2\lambda$ in the distribution, the algorithm errs with probability at least $2\lambda$ and there is a contradiction.
 We conclude that any algorithm erring with probability less than $2\lambda$ must  either  have $x> \sqrt{n}$ or queries greater than $\sqrt{n}-x$ bits of 
 $[n/2+1,  n/2+ \sqrt{n}]$. 
 
 Now, we show that any such deterministic algorithm incurs an average cost
 of $\Omega(\sqrt{n})$ on the CASE 1 strings in this distribution. If $x \geq  \sqrt{n}$ then for a constant fraction of strings in 
 CASE 1, the algorithm will ask at least $\sqrt{n}$ queries in $[1,n/2]$ without finding a 1. 
 If $x < \sqrt{n}$, then with constant probability the algorithm will incur a cost of $x$  in $[1,..n/2]$  and
 go on to incur a cost of $\sqrt{n}-x$ in $[n/2 +1,n/2+\sqrt{n}]$ since all the values there are 0.
 
We have shown that the distributional complexity with error 
$2\lambda$ is $\Omega(\sqrt{n})$. It follows from 
\cite{yao:probabilistic} that the randomized complexity with error 
$\lambda$ is $\Omega(\sqrt{n})$.
\end{proof}

\subsection{The Multiple Streams Problem}\label{multi_streams}
We define a $(\alpha, \beta)$-strategy to be an algorithm which 
occurs over no more than $\alpha$ streams, each with at least a 
(possibly different) set of at least $n/2$ values of 1, and which 
incurs expected cost (number of queries) at most $\beta$. In the 
previous section, we demonstrated a $(1, O(\sqrt{n}))$-strategy. 
We now consider the following protocol over $(k+1)$ streams.\\

\clearpage{} 
\noindent{}\begin{tabular}[t]{l p{2.8in} p{2.8in}}
\hline{}&&\end{tabular}\\
\noindent{{\bf Multi-Round Selection Strategy}}\\\\
\noindent{}For $i=k$ to $1$
\begin{itemize}
\item{}Perform $\lg^{(i)} (n/2)$ queries  uniformly at random over the entire stream. Stop if a 
$1$ is obtained.
\end{itemize}
\noindent{}If no value of $1$ has been found, use the single stream strategy on the final stream.\\
\begin{tabular}[t]{l p{2.8in} p{2.8in}}
\hline{}&&
\end{tabular}
\begin{theorem}\label{multistream}
The above protocol is a $(k+1, O(\lg^{(k)} (n/2) + k))$-strategy.
\end{theorem}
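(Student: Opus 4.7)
The plan is a round-by-round expected cost calculation, where the key identity is that a round's failure probability exactly matches the reciprocal of the next round's cost. Since each stream has at least $n/2$ ones out of $n$ bits, a single uniformly random query returns a $1$ with probability at least $1/2$, so the $\lg^{(i)}(n/2)$ queries of round $i$ all miss with probability at most $(1/2)^{\lg^{(i)}(n/2)} = 1/\lg^{(i-1)}(n/2)$, under the convention $\lg^{(0)}(x)=x$. This telescoping identity is what makes the whole construction work, and I would state it first as a lemma for the analysis.

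Given the per-round failure bound, I would split the expected total cost into three contributions. (i)~Round $k$ is always executed and contributes exactly $\lg^{(k)}(n/2)$ queries. (ii)~For each $i\in\{1,\ldots,k-1\}$, the probability of reaching round $i$ is the product of failure probabilities for rounds $k,k-1,\ldots,i+1$, which is at most
\[
\frac{1}{\lg^{(i)}(n/2)\cdot\lg^{(i+1)}(n/2)\cdots\lg^{(k-1)}(n/2)}.
\]
Multiplying by the round-$i$ cost $\lg^{(i)}(n/2)$ cancels the leading factor, leaving a contribution of at most $1/(\lg^{(i+1)}(n/2)\cdots\lg^{(k-1)}(n/2))\le 1$. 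Summing over the $k-1$ rounds gives $O(k)$. (iii)~The algorithm falls through to the final single-stream phase only if all $k$ earlier rounds fail, which happens with probability at most $1/((n/2)\lg(n/2)\cdots\lg^{(k-1)}(n/2))$; multiplied by the $O(\sqrt{n})$ cost inherited from the single-round strategy of the previous subsection, this contributes $O(1)$.

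Adding the three pieces yields a total expected cost of $\lg^{(k)}(n/2)+O(k)+O(1)=O(\lg^{(k)}(n/2)+k)$. Because the protocol uses $k$ random-sampling rounds plus one fallback single-stream round, the stream count is exactly $k+1$, giving the required $(k+1,O(\lg^{(k)}(n/2)+k))$-strategy. The main (and really only) obstacle is verifying the identity $(1/2)^{\lg^{(i)}(n/2)}=1/\lg^{(i-1)}(n/2)$ and confirming that the telescoping cancellation leaves each round's contribution bounded by a constant; everything else is straightforward bookkeeping.
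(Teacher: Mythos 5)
Your proposal is correct and follows essentially the same route as the paper: a round-by-round expected-cost bound built on the identity $(1/2)^{\lg^{(i+1)}(n/2)} = 1/\lg^{(i)}(n/2)$, with the fallback single-stream strategy supplying the hard guarantee and an $o(1)$ expected contribution. The only difference is cosmetic --- you bound the probability of reaching round $i$ by the full telescoping product of earlier failure probabilities, whereas the paper uses just the immediately preceding round's failure probability, and both give each intermediate round an $O(1)$ contribution, hence the $O(\lg^{(k)}(n/2)+k)$ total.
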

\begin{proof}
Correctness is clear because in the worst case, 
we use the correct one-round strategy in the final round. 
The expected cost is at most
\begin{eqnarray*}
&&\lg^{(k)} (n/2) + \left[\sum_{i=k-1}^{1}\left(\frac{1}{2}\right)^{\lg^{(i+1)} (n/2)}\lg^{(i)} (n/2)\right] +\left(\frac{1}{2}\right)^{\lg{(n/2)}}O(\sqrt{n})\\
&=&\lg^{(k)} (n/2) + \left(\frac{1}{2}\right)^{\lg^{(k)} 
(n/2)}\lg^{(k-1)}(n/2)
+ ... + \left(\frac{1}{2}\right)^{\lg{(n/2)}}O(\sqrt{n})\\
&=&\lg^{(k)} (n/2) + k + o(1)
\end{eqnarray*}
\end{proof}

\begin{corollary}
If there are $\lg^*{(\frac{n}{2})}+1$ streams, then the 
multi-stream algorithm provides a \\$(O(\lg^*{\frac{n}{2}}), 
O(\lg^*{\frac{n}{2}}))$-strategy.
\end{corollary}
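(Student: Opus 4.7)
The plan is to deduce this corollary as an immediate instantiation of Theorem~\ref{multistream} with the choice $k = \lg^{*}(n/2)$. Under this substitution the number of streams becomes $k+1 = \lg^{*}(n/2) + 1 = O(\lg^{*}(n/2))$, which matches the first coordinate of the claimed strategy. It therefore suffices to show that the expected cost bound $O(\lg^{(k)}(n/2) + k)$ from Theorem~\ref{multistream} reduces to $O(\lg^{*}(n/2))$ when $k = \lg^{*}(n/2)$.

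The main step is the observation that, by definition of the iterated logarithm, $\lg^{(\lg^{*}(m))}(m) = O(1)$ for every $m \geq 2$: this is precisely what $\lg^{*}(m)$ counts, namely the number of times one must iterate $\lg$ starting from $m$ before the result drops to a fixed small constant. Applying this with $m = n/2$ gives $\lg^{(\lg^{*}(n/2))}(n/2) = O(1)$, so the additive term $\lg^{(k)}(n/2)$ in Theorem~\ref{multistream}'s cost bound is absorbed into the constant and the whole expression collapses to $O(k) = O(\lg^{*}(n/2))$.

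Putting these two pieces together, the multi-round protocol executed with $k = \lg^{*}(n/2)$ uses $\lg^{*}(n/2) + 1$ streams and has expected cost $O(\lg^{*}(n/2))$, which is exactly the $(O(\lg^{*}(n/2)),\, O(\lg^{*}(n/2)))$-strategy promised by the corollary. I do not foresee a significant obstacle here, since the argument is a direct plug-in to Theorem~\ref{multistream}; the only point worth stating carefully is the $\lg^{*}$ identity above, and one should verify that the boundary case (when $\lg^{(i)}(n/2)$ becomes constant for some $i < \lg^{*}(n/2)$) only improves the bound, so the sum in the proof of Theorem~\ref{multistream} remains $O(k)$ in this regime.
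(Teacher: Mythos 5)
Your proposal is correct and follows essentially the same route as the paper: instantiate Theorem~\ref{multistream} with $k=\lg^{*}(n/2)$ and use the defining property of the iterated logarithm that $\lg^{(\lg^{*}(n/2))}(n/2)$ is a constant (the paper notes it equals $1$), so the cost bound collapses to $O(k)=O(\lg^{*}(n/2))$. The only cosmetic difference is that the paper substitutes into the explicit sum in the proof of Theorem~\ref{multistream}, while you substitute into its stated bound; the content is identical.
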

\begin{proof}
By the definition of the iterated logarithm: 
\begin{eqnarray*} 
\lg^*{n} = \Biggl\{\begin{array}{cc}
  0 & \mbox{~~for~}n\leq{}1 \\
  1+\lg^*(\lg{n}) & \mbox{~~for~}n>1\\
\end{array}
\end{eqnarray*}
\noindent{}if $k=\lg^*{(n/2)}$, we can plug this value into the 
last line of the proof for Theorem~\ref{multistream} which 
contains three terms of which only the first two depend on $k$. 
The first term is $1$, by definition of $\lg^*{n}$, and the second 
is $k$ for a total expected cost of $1 + \lg^*{(n/2)} + o(1)$.
\end{proof}

\subsection{Lower bound for multiple streams}\label{multi_streams_lower}
First, we show the following lemma:
\begin{lemma} \label{l:firstpass}
$\Omega (\lg^{(i+2)}n)$ expected queries are required for a 
randomized algorithm that errs with probability less than $\lambda 
=(\ln^{(i)} n)^{-\epsilon}$ on one stream of length $n$. In 
particular, when $i=0$, $\Omega (\log \log n)$ expected queries 
are required for a randomized algorithm with error less than $1/ 
n^{\epsilon}$, for any constant $\epsilon$.
\end{lemma}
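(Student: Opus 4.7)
My plan is to apply Yao's minimax principle, mirroring the template used for Theorem~\ref{thm:single-stream-lower}. I construct an input distribution $D_i$ on length-$n$ streams (each with at least $n/2$ ones) and argue that every deterministic algorithm whose average error on $D_i$ is less than $2\lambda = 2(\ln^{(i)}n)^{-\epsilon}$ must use $\Omega(\lg^{(i+2)}n)$ expected queries; by Yao, this yields the claimed randomized lower bound.

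I would take $D_i$ to be a parameter-rescaled version of the single-round adversary with $\sqrt{n}$ replaced by $q := \lg^{(i+2)}n$. With probability $1/2$ the stream is a ``CASE~1'' stream: the first half $[1,n/2]$ carries $q$ uniformly random ones, the blocker $[n/2+1, n/2+q]$ is all zero, and the tail $[n/2+q+1,n]$ is all one. With the remaining probability $1/2$ the stream comes from one of the ``CASE~$2k$'' sub-distributions ($k=0,\ldots,q-1$): the first half has $k$ random zeros (otherwise ones), the blocker contains $q-k$ random zeros together with $k$ random ones, and the tail is all zero. The single-round proof used uniform weighting $1/(2q)$ per sub-distribution, which was tight only because $\lambda$ there was exponentially small in $\sqrt{n}$. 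For the present lemma the much larger $\lambda$ requires a nontrivial reweighting of the CASE~$2k$ sub-distributions so that the failure contribution remains at $\Theta(\lambda)$ simultaneously across the relevant range of $(x,y)$; a natural candidate is to take weights proportional to $\binom{n/2}{k}\binom{q}{k}$, which, via Vandermonde, collapses the total CASE~$2$ error into a clean closed form.

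With $D_i$ fixed I would then follow the case analysis of Theorem~\ref{thm:single-stream-lower}. Let $x$ (respectively $y$) denote the number of positions in the deterministic algorithm's query list lying in $[1,n/2]$ (respectively in the blocker). Summing per-case failure probabilities over $k$ and using the identity $\binom{n/2}{k}\binom{k}{x} = \binom{n/2}{x}\binom{n/2-x}{k-x}$ together with its blocker analogue collapses the total CASE~$2$ failure probability into a ratio of two binomials depending only on the sum $s := x+y$. Requiring this to be at most $2\lambda$ then forces $s = \Omega(q)$. Because CASE~$1$ has probability $1/2$, because the density of ones in its first half is only $2q/n = o(1)$, and because its blocker is identically zero, with constant probability the algorithm must execute all $x+y$ list positions before its first tail query returns a $1$, paying $\Omega(q)$ queries and hence $\Omega(q)$ in expectation overall.

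The main obstacle is pinning down the exact calibration of the CASE~$2k$ weights so that the resulting combinatorial inequality actually implies $x+y = \Omega(q)$ in the regime $q = \lg^{(i+2)}n$, $\lambda = (\ln^{(i)}n)^{-\epsilon}$. In the single-round proof this step was nearly automatic because of the exponentially small $\lambda$; here, a naive uniform weighting would let a deterministic algorithm with $x+y = O(1)$ slip under the $2\lambda$ threshold, since CASE~$2_0$ alone would already dominate the error budget for the wrong algorithm. Getting the right weighting and then verifying the corresponding asymptotic $\binom{n/2+q-s}{n/2}\big/\binom{n/2+q}{n/2}\sim (2q/n)^{s}$ is the crux of the proof; once it is in hand, the CASE~$1$ cost bound and the Yao reduction follow routinely.
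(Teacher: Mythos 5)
There is a genuine gap, and it is structural rather than a matter of ``calibrating the weights.'' In your rescaled family the CASE~$2k$ streams have at most $q-1=\lg^{(i+2)}n-1$ zeros in the first half, so the first half of every CASE~$2$ stream is almost entirely ones. Consequently the trivial algorithm that queries one fixed position in $[1,n/2]$, then one fixed position in the tail, and then gives up, has error $0$ on CASE~1 (the tail there is all ones) and error at most $\max_k (2k/n)\le 2q/n$ on every CASE~$2k$ stream, hence total error $O(q/n)$ under \emph{any} weighting of the CASE~$2k$ sub-distributions. Since $q/n\ll\lambda=(\ln^{(i)}n)^{-\epsilon}$, this $O(1)$-cost algorithm already beats the $2\lambda$ threshold, so no reweighting inside this family can force $x+y=\Omega(q)$; the most it can certify is an $O(1)$ bound. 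Your own asymptotic makes this visible: from $\bigl(2q/n\bigr)^{s}\le 2\lambda$ one only gets $s\gtrsim \ln(1/\lambda)/\ln(n/2q)=O\bigl(\epsilon\ln^{(i+1)}n/\ln n\bigr)=O(1)$, not $s=\Omega(q)$. The original single-round argument survives this objection only because there $\lambda$ is exponentially small in $\sqrt n$, so erring on even a single support string (probability $2\lambda$) is forbidden; once $\lambda$ is merely inverse poly-log-iterated, a ``blocker plus sparse first half'' construction leaks error at rate $\mathrm{poly}(1/n)$ per query and cannot bite.

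The paper's proof uses a qualitatively different distribution tailored to large $\lambda$: the stream is cut into three constant-length intervals, a uniformly random one of which is all zeros, while each of the other two carries $n/4$ ones placed at random (so the \emph{live} intervals still have a constant density of zeros). Then each query in a live interval decreases the failure probability only by a constant factor ($\approx 1/4$), so error below $2\lambda$ forces at least two of the three query counts $x_1,x_2,x_3$ to be $\Omega(\log(1/\lambda))=\Omega(\lg^{(i+1)}n)$. The expected-cost bound then comes from a two-case argument: either $x_1\ge c\,\lg^{(i+2)}n$, and the algorithm pays $x_1$ whenever $I_1$ is the dead interval (probability $1/3$); or $x_1$ is small, in which case the algorithm misses $I_1$ with probability at least $(\ln^{(i+1)}n)^{-O(\epsilon)}$ and must then pay its $\Omega(\lg^{(i+1)}n)$ queries in a dead $I_2$. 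Your plan is missing exactly this ``constant decay per query plus a constant-probability dead region'' structure, and without it the CASE-analysis you outline cannot yield $\Omega(\lg^{(i+2)}n)$.
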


\begin{proof}
We apply Yao's min-max method \cite{yao:probabilistic} and 
consider the distribution in which with probability $1/3$, one of 
the $I_1=[1,n/3]$, $I_2=[n/3+1, 2n/3]$, and $I_3=[2n/3+1,n]$ 
intervals is all 0's, and the other two each contain exactly $n/4$ 
$1$'s with the $1$'s distributed uniformly at random. Let $L$ 
denote the list of of queries of a deterministic algorithm, and 
let $x_i$ be the number of queries in $L \cap I_i$. The 
probability that the algorithm fails to find a 1 in any interval 
$I_i$ is ${n/3 -x_i \choose n/4 }/{ n/3 \choose n/4} > 
e^{-7x_i/4}$. Let $I_i$ and $I_j$ be the intervals which are not 
all 0's. Then the probability of failing to find a 1 in either 
$I_i$ and $I_j$ is $> e^{-7(x_i+x_j)/4}$. Hence the probability of 
not finding a $1$ over all strings is $>(1/3) e^{-7(x_i+x_j)/4} > 
2 \lambda$ if $x_1+x_j < (3/7) \epsilon \lg^{(i+1) n}$. We 
conclude that a deterministic algorithm with average error less 
than $2\lambda$ can have at most one $x_i, i=1,2,3$ such that $x_i 
< (3/14) \epsilon \lg^{(i+1)} n$.
 
Now we examine the cost of such an algorithm. Suppose $x_1 \geq 
(3\epsilon/14)(\ln^{(i+2)}n$ then with probability $1/3$ $I_1$ is 
all $0$'s and the cost incurred is $x_1$, for an average cost of 
$(\epsilon/14)(\ln^{(i+2)}n$. Now suppose $x_1 < (3\epsilon/14) 
\ln^{(i+2)}n$. From above, we know $x_2 > (3\epsilon/14)\ln^{(i+1) 
}n$. Then with probability $1/3$, $I_2$ is all $0$'s and with 
probability $> e^{-7x_1/4} > ({\ln^{(i+1)} n})^{-3\epsilon/8}$, 
the algorithm does not find a $1$ in $I_1$ and incurs a cost of 
$(3\epsilon/14)\lg^{(i+1)} n$ in $I_2$ for an average cost of at 
least $(\epsilon/14)({\ln^{(i+1)}n})^{1-3\epsilon/8}$. Hence the 
average cost of any such deterministic algorithm is at least 
$\min\{ (1/14)(\ln^{(i+2)}n, (\epsilon/14) 
(\lg^{(i+1)}n)^{1-3\epsilon/8} \} = \Omega(\ln^{(i+2)}n)$. By 
Yao's min-max method \cite{Yao}, any randomized algorithm with 
error $\lambda$ is bounded below by $1/2$ the average cost of a 
deterministic algorithm with average error $2\lambda$ on any 
distribution. The lemma now follows.
 \end{proof}

\begin{theorem}
For $k>0$, $\Omega(\ln^{(2k)} n )$ expected queries are 
necessary to find a $1$ from  $k+1$ streams with probability $1$.
\end{theorem}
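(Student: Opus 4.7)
The plan is to induct on $k$, at each step applying Lemma~\ref{l:firstpass} to force a nontrivial expected number of queries on the first stream. Let $S(m)$ denote the minimum expected number of queries needed by any randomized algorithm to find a $1$ across $m$ streams with probability $1$. I will show $S(k+1) = \Omega(\ln^{(2k)} n)$ for every $k \geq 1$ by induction on $k$. The key observation used throughout is that if an algorithm $A$ for $m$ streams fails to locate a $1$ on its first stream with probability $p$, then conditional on that failure the rest of $A$ is itself a valid algorithm for $m-1$ streams with probability-$1$ success; hence the inductive hypothesis can be used to lower-bound its expected cost on the remaining streams.

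For the base case $k = 1$ (two streams), take any probability-$1$ algorithm $A$ and let $q_1$ be its expected number of queries on stream $1$. If $q_1 \geq c\, \ln\ln n$ we are done. Otherwise, Lemma~\ref{l:firstpass} with $i = 0$ forces $A$'s probability of missing a $1$ on stream $1$ to be at least $n^{-\epsilon}$ for any fixed $\epsilon \in (0,1/2)$. Conditional on this miss, $A$'s second-stream behavior is a probability-$1$ single-stream algorithm and hence, by Theorem~\ref{thm:single-stream-lower}, incurs expected cost $\Omega(\sqrt{n})$. The total expected cost is therefore at least $n^{-\epsilon} \cdot \Omega(\sqrt{n}) = \Omega(n^{1/2 - \epsilon})$, which dwarfs $\ln\ln n$ and gives a contradiction.

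For the inductive step, assume $S(k) = \Omega(\ln^{(2k-2)} n)$ and let $A$ be an arbitrary probability-$1$ algorithm for $k+1$ streams. Writing $q_1$ for its expected queries on stream $1$, either $q_1 \geq c\, \ln^{(2k)} n$ and we are done, or Lemma~\ref{l:firstpass} with $i = 2k - 2$ forces the miss probability on stream $1$ to be at least $(\ln^{(2k-2)} n)^{-\epsilon}$. Conditional on that miss, the remaining streams form a probability-$1$ $k$-stream instance, so by the inductive hypothesis the remaining expected cost is $\Omega(\ln^{(2k-2)} n)$. Multiplying, the total expected cost is at least $(\ln^{(2k-2)} n)^{1-\epsilon}$, which dominates $\ln^{(2k)} n$ for any fixed $\epsilon < 1$ and contradicts $q_1 < c\, \ln^{(2k)} n$.

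The main obstacle is choosing the parameter $i$ in Lemma~\ref{l:firstpass} so that the two sides of the induction line up: setting $i = 2k - 2$ makes the query lower bound $\ln^{(i+2)} n$ equal the target $\ln^{(2k)} n$, while the resulting error bound $(\ln^{(2k-2)} n)^{-\epsilon}$ pairs against the inductive hypothesis to produce a tail-cost of order $(\ln^{(2k-2)} n)^{1-\epsilon}$, which is strictly $\omega(\ln^{(2k)} n)$. Any fixed $\epsilon < 1$ makes this work, so the constants never chase their own tail.
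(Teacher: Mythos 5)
Your proposal is correct and takes essentially the same route as the paper: induction on the number of streams, splitting on whether the miss probability on the first stream is below or above the threshold $(\ln^{(2k-2)} n)^{-\epsilon}$ from Lemma~\ref{l:firstpass} (with $i=2k-2$, and $i=0$ plus the single-stream $\Omega(\sqrt{n})$ bound in the base case), and lower-bounding the cost in each case before taking the minimum. The only difference is presentational: your closing ``contradiction'' framing is unneeded and slightly off (a large total expected cost does not contradict $q_1$ being small); the intended conclusion, as in the paper, is simply that in either case the expected cost is $\Omega(\ln^{(2k)} n)$.
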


\begin{proof} The proof is by induction on the number of streams. 

\noindent{\it Base Case:} Let $k=1$. Either the algorithm finds a 
1 in the first pass or the second pass. From Lemma 
\ref{l:firstpassh}, for any constant $\epsilon$ any algorithm 
which fails to find a $1$ in the first pass with probability $\leq 
n^{-\epsilon}$ has expected cost $\Omega(\log \log n)$. If the 
algorithm fails to find a $1$ in the first pass with probability 
at least $ n^{-\epsilon} $ then the expected cost to the algorithm 
is at least the probability it fails in the first pass times the 
expected cost of always finding a $1$ in the second and final 
pass, which is $ n^{-\epsilon} * \Omega(\sqrt{n})$. (The second 
factor is from Lemma \ref{thm: single-stream-lower}. Choosing 
$\epsilon < 1/2$, the expected cost is $\Omega(\log \log n )$.

\noindent{\it Inductive Step:} Now assume the hypothesis is true for up to $k>1$ streams. Assume we 
have $k+1$ streams. Any randomized algorithm either fails to find a 1 in the first stream with probability
less than $(1/\ln^{(2k-2)})^{\epsilon}n$, in which case by Lemma \ref{l:firstpass}, the expected cost of the algorithm when it processes the first stream is  $\Omega (\ln^{(2k)}n)$ or the probability that it fails in the first pass is at least $(1/\ln^{(2k-2)})^{\epsilon}n$.  In that case,  the expected cost deriving from queries of the second stream is at least $(1/\ln^{(2k-2)})^{\epsilon}n  * \Omega(\ln^{(2k-2)} n )$ where the second factor of this expression is the expected number of queries needed to find a $1$ in $k$ streams, as given by the induction hypothesis. The minimum expected cost of any randomized algorithm is the minimum of these two possibilities, which is 
 $\Omega (\ln^{(2k)}n)$.
\end{proof}

\section{Application of the Streaming Problem to Reliable Broadcast in Radio Networks}\label{reliable_broadcast}

We now use the streaming algorithms in the radio network model
described by~\cite{koo,{bhandari},{bhandari2},{koo2}}. Under this 
model, nodes are situated in a lattice and can communicate with 
other nodes within a radius $r$, under a defined metric, via 
wireless multicast. Considering the $L_{\infty}$ metric, a 
Byzantine adversary can place faulty nodes at any location in the 
lattice subject to the constraint that every 
$(2r+1)\times{(2r+1)}$ region of nodes contains at most $t$ faulty 
nodes. Previous work has focused on securely transmitting a value 
$m$ from a correct node $s$, known as the dealer, to all nodes in 
the lattice. The algorithms of~\cite{bhandari, bhandari2} are 
optimal in the sense that they achieve reliable broadcast for 
$t<(r/2)(2r+1)$ while Koo~\cite{koo} has shown that no algorithm 
can tolerate $t\geq{}(r/2)(2r+1)$. Recent work by~\cite{koo2} 
extends these results to the setting where faulty nodes can spoof 
the addresses of correct nodes in the network and cause a bounded 
number of message collisions. Under the protocol described 
in~\cite{bhandari, bhandari2}, for a value $m$ of $|m|$ bits, the 
adversary can force a node to listen to 
$(2t+1)|m|\geq{}r(2r+1)|m|={\Theta(r^2|m|)}$ bits before 
committing.

Here we maintain the original model for finite graphs with the 
added assumption that {\it the adversary is computationally 
bounded}. Under this scenario, it is possible to achieve a 
substantial power savings over the original protocols 
of~\cite{bhandari, bhandari2}, as we will now demonstrate. Let $h$ 
be a secure hash function known to all nodes that takes an input 
$m$ and outputs a fingerprint of size $\Theta{(\log^2{|m|})}$ 
bits.

For the following analysis, we define a corridor of width $2r+1$
starting at the dealer located at point $(0,0)$ and ending at node $p
= (x,y)$.  We sometimes denote a point $q$ at grid location $(x,y)$ as
$q(x,y)$.  We define the set of nodes in the corridor to be
$S_{cor}=S_{x,cor}\cup{}S_{y,cor}$ where $S_{x,cor}=\{q(x',y') ~|~
(-r\leq{}x'\leq{}x) \wedge{} (y-r\leq{}y'\leq{}y+r)\}$ and
$S_{y,cor}=\{q(x',y') ~|~(-r\leq{}x'\leq{}r)
\wedge{}(0\leq{}y'\leq{}y+r)\}$. Figure~\ref{corridor} illustrates a
corridor for $r=3$. We use the schedule suggested in~\cite{koo}. Under
this schedule, every node in $N(x,y)$, the neighborhood of point
$(x,y)$, is allotted one time step to broadcast before the schedule
repeats after $(2r+1)^2$ time steps. We will refer to a full pass
through the $(2r+1)^2$ time steps of the schedule as a {\it round}. We
need the following lemma which is an extension of implicit claims
in~\cite{bhandari} and the statement of~\cite{koo2} (Claim 1).

\begin{figure}[t]
\begin{center} 
\hspace{0.5cm}\includegraphics[scale=0.4]{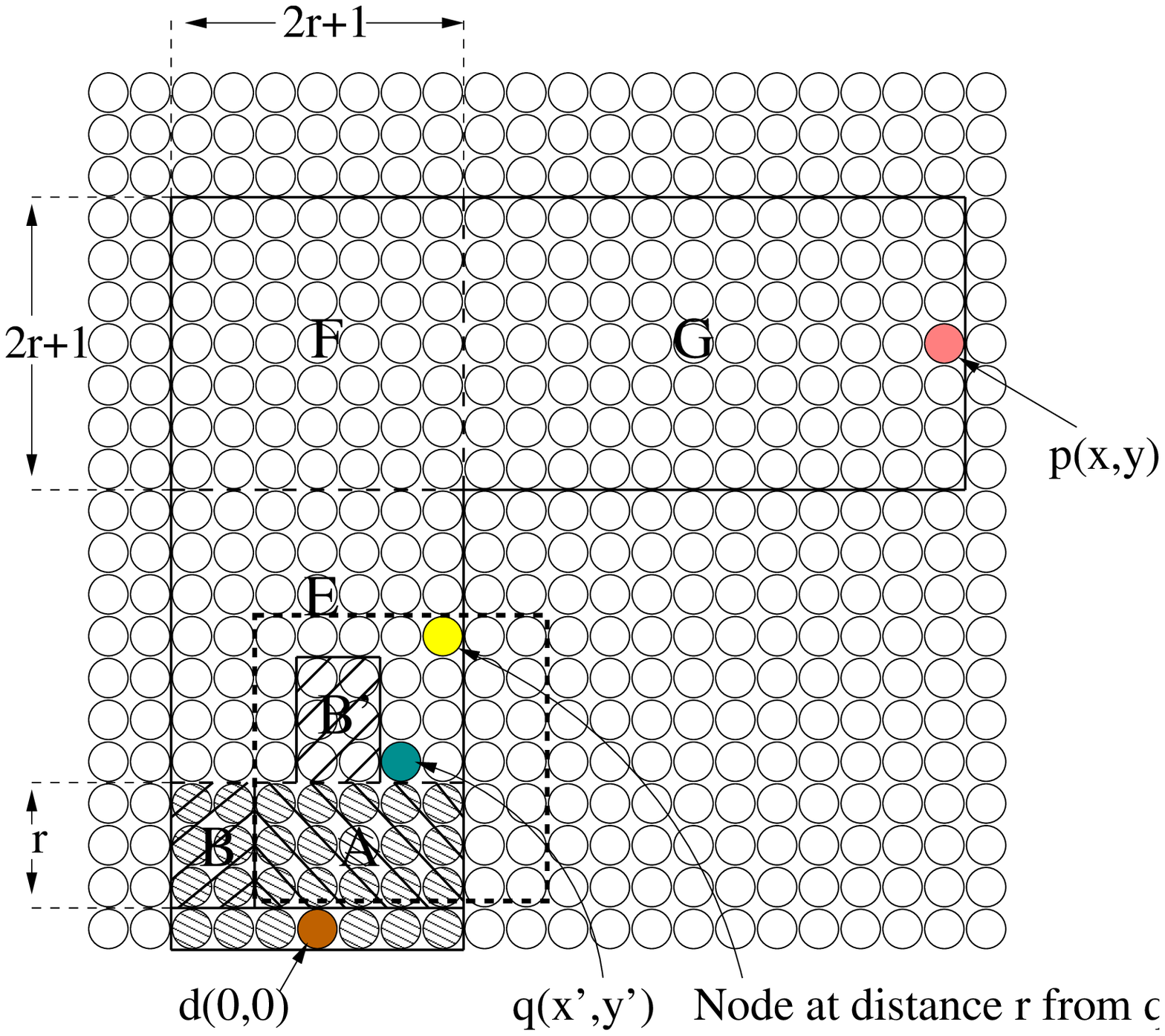}
\end{center}
\caption{A depiction of a corridor for $r=3$. Together the nodes 
in regions $E$ constitute $S_{y,cor}$ while the nodes in $F$ and 
$G$ constitute $S_{x, cor}$. Node disjoint paths of the form 
$(u_i, q)$ originate from nodes $u_i$ in region $A$. Node disjoint 
paths of the form $(u_i,u_i',q)$ originate from nodes $u_i$ in 
region $B$ and traverse through nodes in $u_i'$ in $B'$ to reach 
node $q$. The yellow node lies at a distance $r$ from $q$ and has 
a location farthest from $(0,0)$ that lies in 
$N(q)\cap{}S_{cor}$.}\label{corridor}
\end{figure}

\begin{lemma}\label{delay}
If the dealer, $d(0,0)$, broadcasts $m$ at time step $t_{init}$, 
node $p(x,y)$ is able to commit to $m$ at least by time step 
$t_{init}+2(2r+1)^2(|x|+|y|-r)$.
\end{lemma}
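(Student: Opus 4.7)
The plan is to induct on the layer index $\ell = |x|+|y|-r$, using the TDMA round length of $(2r+1)^2$ time steps as the fundamental unit of progress inside the corridor $S_{cor}$. Each "layer" will consist of all grid points $q(x',y')\in S_{cor}$ with $|x'|+|y'|-r=\ell$, so the statement becomes: \emph{every non-faulty node in layer $\ell$ commits by time $t_{init}+2(2r+1)^2\ell$.} Since the dealer's message reaches all of $N(0,0)$ within a single broadcast, the base case $\ell\le 0$ is immediate once the dealer's slot occurs in the first round after $t_{init}$.

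For the inductive step I would invoke the commit rule implicit in \cite{bhandari,koo2}: a node $p$ commits to $m$ as soon as it has received $m$ along $2t+1$ internally node-disjoint paths (or, in the Koo formulation, from enough correct neighbors arranged as in Figure~\ref{corridor}). Assuming every non-faulty node in layers $0,1,\dots,\ell-1$ has already committed by time $t_{init}+2(2r+1)^2(\ell-1)$, I then argue that within the \emph{next} two full rounds each such committed node gets to broadcast exactly once (that is what a round means), and the collection of committed nodes in the previous $r$-thick band surrounding a layer-$\ell$ point $p$ contains more than $2t$ correct broadcasters distributed across the two regions $A$ and $B\cup B'$ of Figure~\ref{corridor}. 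Two rounds, rather than one, are needed because a layer-$\ell$ node in $S_{x,cor}$ may need messages that are first relayed through a layer-$(\ell-1)$ node in $B'$ before being rebroadcast across the corridor; the first round completes any such relay and the second round delivers the $2t+1$ disjoint witnesses to $p$. This is where the paper's reference to ``implicit claims in \cite{bhandari} and Claim~1 of \cite{koo2}'' does the heavy lifting: those claims supply the geometric/combinatorial fact that, within a corridor of width $2r+1$ and with fewer than $t$ Byzantine nodes per $(2r+1)\times(2r+1)$ square, enough node-disjoint paths $(u_i,q)$ and $(u_i,u_i',q)$ exist to satisfy the commit rule at $p$.

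The main obstacle, and the one I would spend the most care on, is the bookkeeping at the ``elbow'' where $S_{y,cor}$ meets $S_{x,cor}$: here a layer contains points from both sub-corridors, and the disjoint-path count must be verified using witnesses drawn from both the horizontal and vertical portions of the preceding layers. I would handle this by treating the elbow explicitly as a case distinction (whether $p\in S_{y,cor}$, $p\in S_{x,cor}$, or $p$ is on the boundary), in each case exhibiting the required $2t+1$ node-disjoint paths from committed neighbors using the same geometric argument as in \cite{bhandari2}. Summing the $2(2r+1)^2$ steps per layer over $\ell=|x|+|y|-r$ layers then gives the claimed bound $t_{init}+2(2r+1)^2(|x|+|y|-r)$.
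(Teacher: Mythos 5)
There is a genuine gap in your inductive set-up. Your invariant --- ``every corridor node in layer $\ell=|x'|+|y'|-r$ commits by $t_{init}+2(2r+1)^2\ell$'' --- is strictly stronger than what the corridor-restricted propagation can support, and it is exactly at the places you flag (the elbow and the outer edge of the horizontal arm) that it breaks. Take the node $q=(r+1,\,y-r)$ on the bottom edge of $S_{x,cor}$, whose layer is $y-r+1$. Its in-corridor neighbors of strictly smaller layer are only the points $(c,h)$ with $1\le c\le r$ and $y-2r\le h\le y-c$, i.e.\ roughly $\tfrac{3}{2}r^2\approx\tfrac{3}{8}n$ nodes; no horizontal-arm neighbor has smaller layer. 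The commit rule you invoke needs $2t+1$ node-disjoint paths (or at least $t+1$ distinct committed neighbors in a common neighborhood), and with $t$ as large as the model allows (close to $(r/2)(2r+1)$, i.e.\ order $n/4$ or more), an adversary placing its faults inside this small region prevents the threshold from being met two rounds after layer $y-r$ finishes. So the ``one layer per two rounds'' advance cannot be justified locally for such nodes: within the corridor the message genuinely has to climb the vertical arm to height about $y$ before the horizontal arm can start filling, and the correct per-node bound for a bottom-edge node $(x',y-r)$ is about $2(x'+y-r)$ rounds, not the $2(x'+y-2r)$ your invariant asserts. A case distinction at the elbow does not repair this, because the missing disjoint paths simply do not exist in the corridor at the time your induction needs them.

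The paper avoids this by using a directional sweep rather than diagonal layers: first it shows each successive \emph{row} of $S_{y,cor}$ commits two rounds after the previous one (paths $(u_i,q)$ deliver in one round while, concurrently, the relays $(u_i,u_i',q)$ complete, and the second round delivers from the $u_i'$), so all of $S_{y,cor}$ is committed after $2y$ rounds; then the identical argument sweeps \emph{column by column} along $S_{x,cor}$, adding $2(x-r)$ rounds. The invariant being propagated is weaker than yours (it says nothing early about bottom-edge nodes of the horizontal arm), it matches how the node-disjoint paths of \cite{bhandari2} actually lie inside the L-shaped corridor, and it still yields the stated $2(2r+1)^2(|x|+|y|-r)$ bound for the target node $p(x,y)$, which is the only node the lemma speaks about. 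If you reformulate your induction with this directional layer function (height within $S_{y,cor}$, then column index within $S_{x,cor}$), the rest of your outline --- the two-round relay argument and the appeal to the disjoint-path claims of \cite{bhandari,koo2} --- goes through.
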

\begin{proof}
We are assuming the same schedule as in~\cite{koo} and the same 
protocol as in~\cite{bhandari2}; however, we are restricting our 
view to those nodes in $S_{cor}$. That is, nodes in $S_{cor}$ will 
only accept messages from other nodes in $S_{cor}$ and they will 
ignore all messages they receive from nodes outside the corridor. 
Clearly, this can only result in a slowdown in the propagation of 
the broadcast value; moreover, the rectilinear shape of the 
corridor can only slow down the rate of propagation in comparison 
to the original propagation described in~\cite{bhandari2}. An 
argument identical to that in ~\cite{bhandari2} (Theorem $4$) can 
be used to show each node $q(x',y')\in{}S_{cor}$ will commit to 
the correct value by receiving messages along at least $2t+1$ node 
disjoint paths $P_i$ of the form $(u_i, q)$ and $(u_i,u_i',q)$ 
where $u_i,u_i'$ are distinct nodes and lie in the corridor and 
$u_i\in{}N(x',y')$; we do not repeat this argument.

We now consider the time required until $p(x,y)$ can commit to 
$m$. Without loss of generality, assume that $x,y\geq{}0$ and that 
the broadcast first moves nodes in $S_{y,cor}$ and then along 
nodes in $S_{x,cor}$. At $t_{init}$, the dealer broadcasts $m$ and 
all nodes in $N(0,0)$ commit to $m$. Consider a node $q(a,r+1)$ 
where $-r\leq{}a\leq{}r$. It takes at most one round for $q$ to 
receive messages along paths of the form $(u_i, q)$. Concurrently, 
in this one round, nodes $u_i$ can transmit messages to nodes 
$u_i'$ along paths of the form $(u_i, u_i', q)$. At most an 
additional round is required to send from nodes $u_i$ to $q$. 
Therefore, at most two rounds are required before $q$ can commit. 
Note that this holds for all nodes with coordinates $(a,r+1)$ for 
$-r\leq{}a\leq{}r$; this entire row can commit after at most two 
rounds. It follows that all nodes in $S_{y,cor}$ are committed to 
$m$ after $2y$ rounds. An identical argument can be used to show 
that all nodes in $S_{x,cor}$ are committed to $m$ after $2(x-r)$ 
rounds. Therefore, $p$ commits after at most $2(x+y-r)$ rounds or, 
equivalently, $2(2r+1)^2(x+y-r)$ time steps; if $x$ and $y$ can 
take on negative values, this becomes $2(2r+1)^2(|x|+|y|-r)$.
\end{proof}

\noindent{}Our protocol for reliable broadcast is as follows:

\noindent{}\begin{tabular}[t]{l p{2.8in} p{2.8in}} \hline{}&&
\end{tabular}

\vspace{5pt} \noindent{{\bf $(k+1, O(\lg^{(k)}{(n/2)} + k))$ 
Reliable Broadcast with Bit Reduction}}
\begin{enumerate}
\item{}Initially, the dealer $d(0,0)$ does a local broadcast of $(h(m), t_{init})$. 
\item{}Each node $i$ in $N(d)$ commits to $h(m)$ and does a one-time local broadcast
of $\textsc{commit}(i,h(m), t_{init})$. \\\\ The following 
protocol is followed by each node $p$ including the nodes in the 
first two steps:
\item{}If node $p$ receives $\textsc{commit}(q,h(m), t_{init})$ for the first time, it records this
message and broadcasts $\textsc{heard}(p,q,h(m), t_{init})$.
\item{}If node $p$ receives $\textsc{heard}(q,w,h(m), t_{init})$ for the first time, it records this
message.
\item{}If at any point, $p(x,y)$ holds $t+1$ 
messages $m_1, m_2, ...$ such that $\textsc{commit}(a_i,h(m), 
t_{init})$ or $m_i=\textsc{heard}(a_i, a_{i'}, h(m), t_{init})$ 
where for all $i$, $a_i,a_{i'}\in{N(q)}$ for some node $q$ and for 
all $i,j$, $a_i\not={}a_j$, $a_i\not={}a_{j'}$, $p(x,y)$ sets 
$f_{maj}=h(m)$, does a one-time broadcast of 
$\textsc{commit}(p,f_{maj}, t_{init})$.

\item{}For $i=0, ..., k$ 
rounds, $p$ executes the following:
\begin{enumerate}
\item{}Let $G_p$ denote the set of all nodes that sent $p$ 
the majority fingerprint $f_{maj}$ and the majority initial time 
value $t_{init}$. Assume $|G_p|=n$ and, for ease of analysis, let 
us consider $G_p$ to be an ordered set where the nodes, denoted by 
$\{g_0,
..., g_{n-1}\}$, are ordered from earliest to latest by broadcast 
order as dictated by the broadcast schedule. Node $p$ picks slots 
to listen to according to the algorithm for the $(k+1, 
O(\lg^{(k)}{(n/2)} + k))$-strategy. Let $Q_i$ denote the set of 
nodes corresponding to these chosen time slots.

\item{}Node $p$ listens during the time 
slot only when 1) node $q(x', y')\in{Q_i}$ is scheduled to 
broadcast and 2) immediately following time step 
$t_{init}+2(2r+1)^2(|x'|+|y'|+r)+1$ (inclusive). In listening to 
node $q$, $p$ will obtain a value $m_q$. If $h(m_q)=f_{maj}$, then 
$p$ commits to $m_q$, breaks the for-loop and proceeds directly to 
Step 7.
\end{enumerate}

\item{}Node $p(x,y)$ waits until the time step 1) when node $p$ is scheduled 
to broadcast and 2) immediately following time step 
$t_{init}+2(2r+1)^2(|x|+|y|+r)+1$ (inclusive) and then does a 
broadcast of $\textsc{commit}(p,m_q)$. Node $p$ broadcasts 
$\textsc{commit}(p,m_q)$ for a second and final time on its next 
turn (in the next round).
\end{enumerate}
\begin{tabular}[t]{l p{2.8in} p{2.8in}}
\hline{}&&
\end{tabular}

\subsection{The Necessity of Hard Guarantees}

\noindent{}We now explicitly address the necessity of a hard 
guarantee in our reliable broadcast protocol. At first glance, it 
may appear that the hard requirement can be ignored. However, if a 
correct node fails to commit to a correct message,~\cite{bhandari} 
proves that reliable broadcast is impossible\footnote{We need to 
guarantee the existence of at least $r(2r+1)$ paths of which 
strictly less than half traverse a faulty node. If any good node 
fails, this destroys the necessary invariant that we always have a 
(narrow) majority of uncorrupted disjoint paths.}. Therefore, for 
reliable broadcast to be guaranteed with probability $1$, we 
cannot tolerate {\it any} failures. In turn, to guarantee reliable 
broadcast with high probability, the probability of such a failure 
must be overwhelmingly small in the total size of the network, 
$N$. There are two ways in which a correct node could fail to 
obtain the correct message by the above protocol:
\begin{enumerate}
\item{}The adversary achieves a collision for the secure hash function.
\item{}A good node does not query enough nodes in its neighborhood.
\end{enumerate}
\noindent{}Let $n=(2r+1)^2$ be the number of nodes in a single 
neighborhood. The first case is avoided by using a large enough 
fingerprint. In the second case, we can use a sampling algorithm 
to reduce the probability that a node fails to obtain the correct 
message. However, without a hard guarantee, the probability of 
such a failure is bounded in terms of $n$ (the size of the 
neighborhood), not $N$. Realistically, we should expect $r$ to be 
small in comparison to the total size of the network and, 
consequently, $n\ll{}N$. Therefore, in this case, we cannot make 
the probability of such a failure small in terms of the total size 
of the network and so we require a hard guarantee on message 
receipt.

\subsection{Correctness of the Bit-Reduction Algorithm}

\noindent{}Now that we have illustrated the key design points of 
our algorithm, we prove its correctness.

\begin{theorem}\label{thm_correctness}
Let $m$ denote the message sent from the dealer and let $|m|$ be 
the number of bits in $m$. Let $h$ be a secure hash function that 
maps to a fingerprint of size $\Theta(\log^2 |m|)$. The reliable broadcast protocol 
with bit reduction has the following properties:
\begin{itemize}
\item{}Reliable broadcast of $m$ is achieved with probability of error that is 
superpolynomially small in $|m|$ (i.e. less than $O(1/|m|^{C\log 
|m|})$ for some constant $C$).
\item{}If $k=1$, node $p$ listens to $O(r^2\log^2{|m|} + |m|r)$ bits in expectation. 
\item{}If $k\geq{}2$, node $p$ listens to $O(r^2\log^2{|m|} + |m|(\lg^{(k)}{r} + k))$ bits in expectation.
\end{itemize}
\end{theorem}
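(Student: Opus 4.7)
The plan is to split the analysis into the two natural phases of the protocol. In Phase~1 (Steps~1--5), the fingerprint $h(m)$ is disseminated via the Byzantine-tolerant broadcast protocol of~\cite{bhandari2,koo2}, which, under the stated assumption that strictly fewer than $n/4$ nodes in any $(2r{+}1)\times(2r{+}1)$ square are faulty, delivers the same $\Theta(\log^2|m|)$-bit fingerprint to every non-faulty node. In Phase~2 (Steps~6--7), each non-faulty node $p$ runs the Bad Santa strategy from the earlier sections to find a single neighbour whose broadcast value hashes to the committed $f_{maj}$ and then commits to that value. This decomposition isolates the only probabilistic aspect (hash security) to Phase~1 and the only per-node sampling to Phase~2.

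For the error bound, the only way a non-faulty $p$ could accept some $m'\ne m$ is for the computationally bounded adversary to produce a collision $h(m')=h(m)$; for a secure hash with digest length $\Theta(\log^2|m|)$ this probability is $O(1/|m|^{C\log|m|})$ for some constant $C$, giving the first bullet. For the bit counts, Phase~1 costs $O(r^2\log^2|m|)$ bits, since each node listens on $\Theta(n)=\Theta(r^2)$ scheduled slots, each carrying $\Theta(\log^2|m|)$ fingerprint bits. In Phase~2 I would view $p$'s slot-by-slot view of $G_p=\{g_0,\ldots,g_{n-1}\}$, ordered by broadcast schedule, as an instance of Bad Santa: by Lemma~\ref{delay} together with the delay condition in Step~6(b), whenever $p$ wakes for slot $g_j$, either $g_j$ is faulty (a ``$0$'') or $g_j$ has already committed to $m$ and broadcasts it (a ``$1$''); since the faulty fraction is strictly below $1/2$, the Bad Santa hypothesis is met. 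Plugging in Theorem~\ref{t:single} for $k=1$ gives $O(\sqrt{n})=O(r)$ expected slots, and Theorem~\ref{multistream} for $k\ge 2$ gives $O(\lg^{(k)}(n/2)+k)=O(\lg^{(k)} r+k)$ expected slots; multiplying by the $|m|$ bits transmitted per slot and summing with Phase~1 yields the stated expressions.

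The main obstacle I anticipate is cleanly justifying the reduction to Bad Santa against the particular adversary allowed here. Theorems~\ref{t:single} and~\ref{multistream} analyse an \emph{oblivious} adversary that fixes the bit string before the algorithm begins, whereas the Byzantine adversary could in principle adapt the behaviour of faulty nodes to $p$'s sampling pattern. I would resolve this by noting that the faulty/non-faulty partition of $N(p)$ is fixed before Phase~2 starts, and the only remaining degree of freedom available to faulty nodes (withholding or sending junk) can only turn what would be a ``$1$'' into a ``$0$''; the worst case for $p$ is therefore dominated by an oblivious assignment with exactly $n/2$ ones, so the bounds from Theorems~\ref{t:single} and~\ref{multistream} apply directly. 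A secondary point, which I expect to be routine, is to check that the rare Phase~1 hash-collision event does not inflate the Phase~2 expectation: on that event I would upper-bound $p$'s cost by the trivial $O(|m|n)$ bits, contributing $O(|m|n)\cdot O(1/|m|^{C\log|m|}) = o(1)$ to the total expectation.
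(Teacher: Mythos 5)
Your proposal matches the paper's own argument in all essentials: correctness via the fingerprint dissemination of Steps~1--5 plus hash security, readiness of $p$ guaranteed by Lemma~\ref{delay} together with the broadcast-delay condition in Step~6, and the cost bounds obtained by reducing the Step~6 sampling to the Bad Santa results (Theorem~\ref{t:single} for $k=1$, Theorem~\ref{multistream} for $k\ge 2$) at $|m|$ bits per query, added to the $O(r^2\log^2|m|)$ fingerprint phase. Your two extra observations---that the faulty/non-faulty partition is fixed so the oblivious-adversary Bad Santa bounds apply, and that the negligible collision event cannot inflate the expectation---are points the paper leaves implicit, and they are handled correctly.
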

\begin{proof}
We begin by proving correctness. By assumption, $h$ is a secure 
hash function and, therefore, given $h(x)$, the probability that 
the adversary obtains a value $x'$ such that $h(x')=h(x)$ is 
$2^{-\Theta(\log^2{|m|})}=|m|^{-\Theta(\log{|m|})}$. Therefore, it 
will take the adversary superpolynomial time in $m$ to forge such 
an $x'$ and so $f_{maj}$ will correspond to the correct value $m$. 
Steps 1-5 of the BR protocol, are no different than the broadcast 
presented in~\cite{bhandari2} where the values being transmitted 
are a fingerprint and an initial time value. Consequently, every 
correct node will be able to derive a majority fingerprint 
$f_{maj}$ and $t_{init}$ value from the messages it receives. By 
the security of $h$, $h(m)=f_{maj}$ with high probability and, by 
the correctness of the protocol in~\cite{bhandari2}, the majority 
time value will be the true $t_{init}$ sent
originally by the dealer.\\

\noindent{}In Step $6a$, if $p$ receives $m_q=h(f_{maj})$ from $q$ 
for some $i=0, ..., k$, $p$ can be assured, again with high 
probability, that $m_q$ is the correct value $m$. Let $S_p$ denote 
the set of nodes from which $p$ can receive \textsc{commit} 
messages. A complication arises in attempting to guarantee that 
$p$ does not miss any of the $k+1$ broadcasts of $m_q$ by 
$q\in{}S_p$. For instance, an extreme case occurs if every node in 
$S_p$ broadcasts \textsc{commit} messages before $p$ begins 
sampling for the true value. In such a case, $p$ will never 
receive the actual value to check against its fingerprint. To 
ensure that the appropriate nodes do not broadcast $m_q$ before 
$p$ is ready, it is enough to impose a sufficiently long delay 
before any $q\in{S_p}$ broadcasts $m_q$; enough time for $p$ to 
first commit to $f_{maj}$. In particular, if all correct nodes in 
$N(q)$ have obtained their majority fingerprint value, then it is 
safe for $q$ to begin executing its $k+1$ broadcasts of 
$\textsc{commit}(p,m_q)$. Assuming the dealer broadcasts the 
initial fingerprint at time $t_{init}$, then by Lemma~\ref{delay} 
it will take at most $t_{init}+2(2r+1)^2(|x|+|y|-r)$ time steps 
for $p(x,y)$ to commit to $f_{maj}$. Therefore if $q$ waits until 
after this time step, it can be assured that $p$ will be able to 
receive its transmission of $m$. In general, $q$ must wait until 
all peers in $N(q)$ can commit to a fingerprint. Since all nodes 
in $N(q)$ are located at most $r$ distance away, $q(x',y')$ can 
wait until time 
$t_{init}+2(2r+1)^2((|x'|+r)+(|y'|+r)-r)=t_{init}+2(2r+1)^2(|x'|+|y'|+r) 
$ at which point all of $N(q)$ is guaranteed to have committed to 
a fingerprint. Define $T$ be the time slot that is 1) scheduled 
for node $q$ to broadcast and 2) occurs first after time step 
$t_{init}+2(2r+1)^2(|x'|+|y'|+r)+1$. By Lemma~\ref{delay} 
guarantees that $p\in{N(q)}$ will have already committed to a
majority fingerprint value and be ready to listen to $m$.\\

\noindent{}By the hard guarantee of our streaming problem 
algorithm, $p$ will obtain a message that, when hashed, matches 
the fingerprint to which $p$ committed. Therefore, $p$ is 
guaranteed to commit to a value $m_p$ and, with high probability 
in $|m|$, $m_p$ is the correct value sent by the 
dealer\footnote{We assume the dealer is correct; otherwise, $p$ is 
committing to the value agreed upon by all correct nodes in $N(d)$ 
which is also correct.}. Finally, again guaranteed by 
Lemma~\ref{delay}, $p$ only begins the first of two consecutive 
broadcasts of $\textsc{commit}(p, m_p)$ after all nodes in $N(p)$
have been given enough time to commit to a fingerprint.\\

\noindent{}To show resource costs, note that $p$ listens to 
$(2t+1)\leq{r(2r+1)}$ fingerprints, for a total of 
$O(r(2r+1)\log^2{|m|})$ bits before sampling. Now note that the 
sampling strategy described in Step $6$ matches our algorithm for 
the streaming problem\footnote{Note that selecting a random node 
is necessary; if not, the adversary might have faulty nodes send 
correct fingerprints in the first round and, if $p$ selects nodes 
from $\mathcal{B}$ in a deterministic fashion, the adversary may 
force $p$ to listen to many messages that do not hash to 
$f_{maj}$.}. While in the streaming problem, we attempt to obtain 
a $1$ at unit cost per query, here node $p$ is attempting to 
select a correct node at the cost of listening to $|m|$ bits per 
selection. Theorem~\ref{multistream} guarantees that a correct 
message is obtained at an expected cost of $O(\lg^{(k)}{(n/2)} + 
k)$ bits where $n=(2r+1)^2$. Therefore, the total expected cost is 
$O(r^2\log^2{|m|} + |m|(\lg^{(k)}{(r)} + k))$.
\end{proof}

\noindent{}Note that $|m|$ need not be very large in order to make 
the probability of a collision negligible. For instance, if 
$|m|=1$ kilobyte, the probability of a collision is already less 
than $10^{-30}$. Finally, it follows immediately from the above 
theorem that for a message $m$ of size 
$|m|\in{}\omega{}(\log^2{N})$, the bit-reduction algorithm 
achieves a substantially lower communication complexity.

\section{Future Work and Conclusion}
  
We have designed new algorithms for reliable broadcast in radio 
networks that achieve near optimal energy savings. These 
algorithms consume significantly less power than any other 
algorithms for this problem of which we are aware. In the process 
of designing and analyzing these algorithms, we have defined and 
studied a novel data streaming problem, which we call the Bad 
Santa problem.

Several open problems remain including: Can we close the gap 
between the upper and lower-bound for the multi-round Bad Santa 
problem? Can we be robust to Byzantine faults caused by a 
computationally \emph{unbounded} adversary and still reduce power 
consumption? Can we generalize our techniques to radio networks 
that are not laid out on a two dimensional grid? Are there other 
applications for the Bad Santa data streaming problem both in and 
outside the domain of radio networks?

\section{Acknowledgements}
We gratefully thank Kui Wu, Chiara Petrioli and James Horey for 
their helpful comments on issues of power consumption in radio 
networks.

\end{document}